\documentclass[aps,prx,twocolumn,superscriptaddress,longbibliography]{revtex4-2}

\usepackage{microtype}
\usepackage{graphicx}
\usepackage{booktabs} 

\usepackage{hyperref}

\usepackage{bm}
\usepackage{physics}
\usepackage{dsfont}
\usepackage[export]{adjustbox} 

\usepackage{xcolor}
\definecolor{customPink}{cmyk}{0, 0.5, 0.2, 0}

\usepackage{amsthm}

\usepackage[capitalize]{cleveref}

\newcommand{\bthe}{\bm{\theta}}
\newcommand{\be}{\mathbf{e}}
\newcommand{\cE}{\mathcal{E}}
\newcommand{\ot}{\otimes}

\newcommand{\id}{\mathds{1}}
\newcommand{\fl}{\mathds{F}}

\newcommand{\Ket}[1]{\left.\left| #1 \right\rangle\!\right\rangle}

\newcommand{\Braket}[1]{\left\langle\!\left\langle #1 \right\rangle\!\right\rangle}

\newcommand{\dg}{\dagger}

\newcommand{\cL}{\mathcal{L}}
\newcommand{\cH}{\mathcal{H}}

\newcommand{\bE}{\mathop{\mathbf{E}}}
\newcommand{\supp}{\mathrm{supp}}
\newcommand{\cP}{\mathcal{P}}
\newcommand{\lcs}{\Gamma}

\newcommand{\wg}{\mathrm{Wg}}
\renewcommand{\Tr}{\mathrm{Tr}} 
\newcommand{\Var}{\mathrm{Var}}
\newcommand{\dD}{\mathds{D}}
\newcommand{\lone}[1]{\left\Vert #1 \right\Vert_{1}}
\newcommand{\ltwo}[1]{\left\Vert #1 \right\Vert_{\infty}}
\newcommand{\fbnorm}[1]{\left\Vert #1 \right\Vert_{\rm F}}

\newtheorem{theorem}{Theorem}
\newtheorem{corollary}[theorem]{Corollary}

\newtheorem{remark}{Remark}
\newtheorem{definition}[]{Definition}

\newtheorem{example}{Example}

\Crefname{equation}{Eq.}{Eqs.}
\Crefname{figure}{Fig.}{Figs.}
\Crefname{tabular}{Tab.}{Tabs.}
\Crefname{section}{Sec.}{Secs.}

\begin{document}

\title{Barren Plateaus Beyond Observable Concentration}

\author{Zi-Shen Li}
\affiliation{Quantum Information and Computation Initiative, Department of Computer Science, School of Computing and Data Science, The University of Hong Kong, Pokfulam Road, Hong Kong, China}

\author{Bujiao Wu}
\email{wubujiao@iqasz.cn}
\affiliation{International Quantum Academy, Shenzhen 518048, China}

\author{Xiao-Wei Li}
\affiliation{Department of Physics, College of Physics, Chengdu University of Technology, Chengdu, 610059, China}

\author{Man-Hong Yung}
\affiliation{International Quantum Academy, Shenzhen 518048, China}

\begin{abstract}
Parameterized quantum circuits (PQCs) are central to quantum machine learning and near-term quantum simulation, but their scalability is often hindered by barren plateaus (BPs), where gradients decay exponentially with system size. Prior explanations, including expressivity, entanglement, locality, and noise, are often presented in ways that conflate two distinct issues: concentration of the measured observable and loss of parameter sensitivity caused by circuit dynamics.
We develop a unified statistical framework that separates these mechanisms. We show that several standard BP explanations, including locality- and entanglement-related effects, can be understood through a single phenomenon that we term observable concentration (OC). Importantly, we prove that avoiding OC is necessary but not sufficient for trainability. Beyond OC, we identify two distinct mid-circuit sources of gradient suppression. 
{First, in circuits with effectively independent forward and backward blocks, parameter perturbations can propagate outside the measurement light cone and become inaccessible to the final observable, yielding information-loss-induced BPs.}
{Second, BPs can also arise in circuits with highly correlated forward and backward blocks, as we demonstrate through an echo-type circuit model that is reminiscent of information scrambling.}
\end{abstract}

\maketitle

\section*{Introduction}

Quantum computing faces substantial experimental and theoretical challenges. While certain quantum algorithms, such as Shor's algorithm~\cite{shorPolynomialTimeAlgorithmsPrimeFactorizationDiscreteLogarithms1997,monz2016RealizationScalableShorAlgorithm}, Grover's search~\cite{groverFastQuantumMechanicalAlgorithmDatabaseSearch1996}, and the HHL algorithm~\cite{harrowQuantumAlgorithmLinearSystemsEquations2009}, offer provable quantum advantages, their practical realization on near-term quantum devices remains difficult. This experiment-theory gap primarily arises from the challenge of designing quantum algorithms that can simultaneously ensure provable advantages while remaining implementable on current hardware.

Parameterized quantum circuits (PQCs) have emerged as a versatile framework for demonstrating the capabilities of quantum devices, with applications spanning quantum machine learning~\cite{biamonte2017QuantumMachineLearning,schuldQuantumMachineLearningFeatureHilbertSpaces2019,meyerExploitingSymmetryVariationalQuantumMachineLearning2023}, variational quantum eigensolvers~\cite{peruzzo2014VariationalEigenvalueSolverPhotonicQuantumProcessor,kandalaHardwareefficientVariationalQuantumEigensolverSmallMolecules2017,carolanVariationalQuantumUnsamplingQuantumPhotonicProcessor2020}, and quantum simulations~\cite{li2017EfficientVariationalQuantumSimulatorIncorporatingActive,kokail2019SelfverifyingVariationalQuantumSimulationLatticeModels,endoVariationalQuantumSimulationGeneralProcesses2020}. 
A central theoretical challenge in deploying PQCs for these tasks is the phenomenon of barren plateaus (BPs).
Barren plateaus manifest as exponential flatness in the training landscape, arising from various factors including entanglement \cite{ortizmarreroEntanglementInducedBarrenPlateaus2021,patti2021EntanglementDevisedBarrenPlateauMitigation}, cost function locality \cite{cerezoCostFunctionDependentBarrenPlateausShallow2021,uvarov2021BarrenPlateausCostFunctionLocalityVariational}, the presence of noise \cite{wangNoiseinducedBarrenPlateausVariationalQuantumAlgorithms2021}, etc.
As a major obstacle to the scalability of PQCs, BPs impose an exponential overhead in the computational cost of gradient estimation.
In addition, recent research has proposed several approaches to mitigate or avoid BPs, including the design of BP-free circuit architectures~\cite{pesahAbsenceBarrenPlateausQuantumConvolutionalNeural2021}, the development of diagnostic tools for BP detection~\cite{patti2021EntanglementDevisedBarrenPlateauMitigation,laroccaDiagnosingBarrenPlateausToolsQuantumOptimal2022}, the formulation of training strategies to mitigate BPs~\cite{sackAvoidingBarrenPlateausUsingClassicalShadows2022}, etc.

\begin{figure}[t]
    \centering
    \includegraphics[width=1\linewidth]{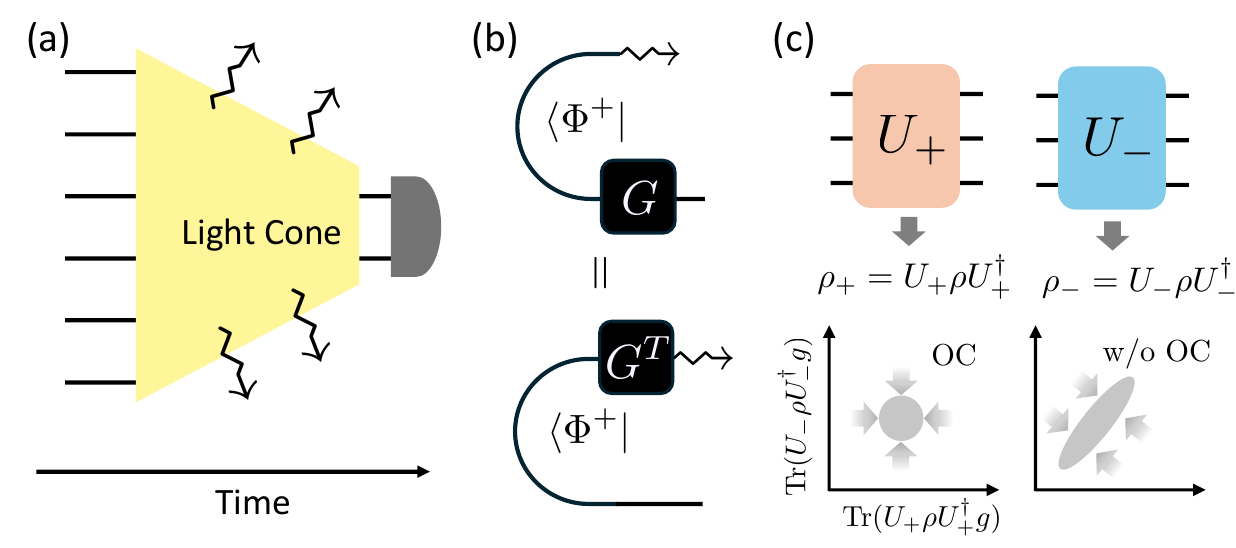}
    \caption{Demonstration of information loss and gradient vanishing. (a) depicts information loss in variational quantum circuits. (b) shows a simple illustration on how this information loss results in vanishing response. (c) shows different statistical behaviors of BPs with and without observable concentration (OC).}
    \label{fig:lightcone-concept}
\end{figure}

Although these approaches offer a pathway to circumvent barren plateaus in PQCs, they may inadvertently imply classical simulability. This reveals a critical dilemma: mitigating barren plateaus may, in some cases, eliminate the very quantum advantage that PQCs are intended to achieve \cite{cerezo2025DoesProvableAbsenceBarrenPlateausImply}. Moreover, the presence of gate noise~\cite{shaoSimulatingNoisyVariationalQuantumAlgorithmsPolynomial2024} or certain types of circuit randomness~\cite{angrisaniClassicallyEstimatingObservablesNoiselessQuantumCircuits} can also render PQCs classically simulable. In such cases, the loss function estimation becomes concentrated within a classically accessible regime, making the computation either deterministically~\cite{shaoSimulatingNoisyVariationalQuantumAlgorithmsPolynomial2024,martinez2025EfficientSimulationParametrizedQuantumCircuitsNonunitala} or probabilistically~\cite{angrisaniClassicallyEstimatingObservablesNoiselessQuantumCircuits} simulable. These observations naturally raise an important question: \textit{Can a parameterized quantum circuit simultaneously exhibit quantum advantage while remaining free of barren plateaus?} To identify and characterize such regimes, it is essential to develop a deeper understanding of the relationship between BPs and circuit architecture, necessitating a comprehensive reassessment of BPs.


{A recurring ambiguity in the literature is that BP is sometimes associated not only with gradient concentration, but also with concentration of the cost (observable) values themselves.}
{These notions coincide in many commonly studied settings, but they are not equivalent: by the parameter-shift rule, a gradient is a difference of two cost evaluations, and can therefore vanish either because each evaluation concentrates, or because the two evaluations become nearly indistinguishable due to strong correlations.}
We here define the \emph{observable concentration} (OC) as the phenomenon where the variance of the observable is concentrated exponentially toward its mean. 
In addition, we remark that Angrisani \textit{et al.} \cite{angrisaniClassicallyEstimatingObservablesNoiselessQuantumCircuits} demonstrated that local observables in certain noiseless quantum circuits can be classically estimated within acceptable mean errors. We found that this classical simulability can arise directly from OC. 
In this study, we explore the fundamental origins of BPs, including previously identified factors such as cost function locality \cite{cerezoCostFunctionDependentBarrenPlateausShallow2021} and entanglement \cite{ortizmarreroEntanglementInducedBarrenPlateaus2021}, which we demonstrate are intrinsically linked to OC.

{Our goal is to complement these end-of-circuit variance viewpoints by making explicit an additional, operationally measurable ingredient: the parameter-shift correlation between the two shifted evaluations that define the gradient.}
{This term is shaped by mid-circuit information dynamics and can dominate the onset of BPs even when end-of-circuit observable statistics remain non-concentrated.}
{This also clarifies the relation to recent unified theories of barren plateaus.}
{Lie-algebraic analyses characterize how the ansatz explores accessible operator directions and are particularly relevant to the behavior of deep PQC families~\cite{ragone2024LieAlgebraicTheoryBarrenPlateausDeep,ragoneUnifiedTheoryBarrenPlateausDeepParametrized2023}, while purity- and entanglement-based analyses connect trainability to concentration of reduced states~\cite{ortizmarreroEntanglementInducedBarrenPlateaus2021,patti2021EntanglementDevisedBarrenPlateauMitigation}.}
{Our decomposition does not replace these viewpoints.}
{Instead, we focus on locally scrambling ensembles \cite{caro2023OutofdistributionGeneralizationLearningQuantumDynamicsa,angrisaniClassicallyEstimatingObservablesNoiselessQuantumCircuits} that provide a complementary regime of interest. 
In addition, we separate the end-of-circuit observable fluctuation from the mid-circuit transmission of parameter sensitivity, so that one can diagnose cases where the observable remains non-concentrated but the gradient vanishes to zero.}
{Within this framework, we organize the mid-circuit mechanisms into two circuit settings: independent forward--backward layers, where the dominant mechanism is information loss, and correlated forward--backward layers, where an echo-type construction gives a complementary information-scrambling example.}

{In the independent-layer setting, information loss is illustrated by the toy setting in \Cref{fig:lightcone-concept}. If one share of a maximally entangled pair $\Phi^+$ leaves the causal light cone of the measured region, then the perturbation induced by a local gate $G$ is transferred into degrees of freedom that are inaccessible to the final measurement, leading to an approximately vanishing linear response. Beyond this picture, we derive rigorous upper bounds on the gradient variance under minimal randomness assumptions. In contrast to analyses based on global Haar randomness~\cite{mele2024IntroductionHaarMeasureToolsQuantumInformation}, our results require only local scrambling conditions (for example, a local $2$-design), which better match realistic PQC architectures.}

{In the correlated-layer setting, we analyze an echo-type circuit in which the backward block is the inverse of the forward block. This complementary example shows that information-scrambling-induced suppression of the correlation factor can yield BPs even when the observable statistics remain non-concentrated.}
{We complement the theory with explicit constructions and numerical evidence. In particular, we study a QCNN-inspired hierarchical circuit with linear depth and observe information-loss-induced barren plateaus in regimes where observable concentration is absent.}

\section*{Background}

BPs reflect the trainability of parametrized quantum circuits (PQCs) \cite{mccleanBarrenPlateausQuantumNeuralNetworkTraining2018}, which are usually considered at the ensemble level. A PQC $U(\bthe)$ is parameterized by a set of real-valued parameters $\bthe=(\theta_1,\theta_2,\dots, \theta_m)^T$. In the ensemble approach, we associate each unitary with a probability, i.e., $\cE=\{U,p(U)\}$ and the expectation operation over this ensemble is defined as
$\mathbf{E}_{U\sim\cE}~(\cdot) := \int dU p(U) (\cdot)$,
which captures the typical behavior of the entire set of unitaries.
Any specific parameterization configuration of a PQC defines an ansatz for minimizing a cost function of the form $C(\bthe):=C[U(\bthe)] :=\Tr[U(\bthe)\rho_0 U(\bthe)^\dg H]$ where $\rho_0$ denotes the input state, and $H\in \cL(\cH)$ is a Hermitian observable acting on the Hilbert space $\cH$. 
Here, $\cL(\cH)$ denotes the space of linear operators acting on $\cH$.
The flatness of the training landscape is quantified by the variance of the gradient. 

\begin{definition}[Barren plateaus]
A barren plateau (BP) arises when the gradient of the cost function, given by $\nabla_{\bthe} C[U(\bthe)] = (\partial_{\theta_1}C,\cdots,\partial_{\theta_m}C)^T$ vanishes exponentially with the number of qubits $n$ over the ensemble $\cE$.
Specifically, for some constant $\alpha>1$ and $j\in [m]$,
    \begin{align*}
    \Var_{U\sim\cE}~\partial_{\theta_j} C[U(\bthe)] \in \mathcal{O}(\alpha^{-n}).
    \end{align*} 
\end{definition}
This definition is an ensemble statement about typical gradients under random parameter sampling (e.g., random initialization), and it should be distinguished from the trivial fact that any landscape can contain isolated critical points with exactly zero gradient.

Although gradient-free optimization methods have been proposed \cite{arrasmith2021effect}, they encounter fundamentally equivalent challenges, as the presence of BP implies an exponentially low probability of determining the descent direction in parameter space. Therefore, without loss of generality, we focus our analysis on gradient-based approaches in this work.

\section*{Setup}

We consider a system of $n$ qubits, for which the associated Hilbert space $\mathcal{H}$ has dimension $d = 2^n$. 
We begin by considering $H = c g$, where $g$ is a local Pauli operator and $c \in \mathbf{R}$ is a constant. Extensions to general Hamiltonians will be discussed subsequently. We assume that the ensemble $\cE_{\rm LS}=\{U(\bthe),p(\bthe)\}$ is \emph{locally scrambling} \cite{caro2023OutofdistributionGeneralizationLearningQuantumDynamicsa}, a property satisfied by many circuits of physical interest \cite{angrisaniClassicallyEstimatingObservablesNoiselessQuantumCircuits}. We give the formal definition as follows.

\begin{definition}[Locally scrambling ensemble]\label{def:local-scramb}
    The ensemble $\cE_{\rm LS}$ is said to be \emph{locally scrambling} if for any random unitary $U \sim \cE_{\rm LS}$ and a tensor product of single-qubit Clifford gates $\bigotimes_j V_j$, we have $(\bigotimes_j V_j) U \sim \cE_{\rm LS}$.
\end{definition}

We also let the probability distribution $p(\bthe)$ be translation invariant in $\bthe$, which is equivalent to uniformly sampling $\theta_j$ from $[0,2\pi)$. According to the parameter shift rule \cite{crooksGradientsParameterizedQuantumGatesUsingParametershift2019,liHybridQuantumClassicalApproachQuantumOptimalControl2017}, the derivative of the cost function can be written as
\begin{align*}
    \partial_{\theta_i} C(\bthe) = u\left[C(\bthe+\be_i\pi/4u)-C(\bthe-\be_i\pi/4u)\right],
\end{align*}
where $u$ is a constant that depends on the generator of the parameterized gate and $\be_i$ is the $i$-th unit vector.
Let $\rho(\boldsymbol{\theta}) = U(\boldsymbol{\theta}) \rho_0 U^\dagger(\boldsymbol{\theta})$, and define the shifted states $\rho_\pm(\bm \theta) := \rho(\boldsymbol{\theta} \pm \mathbf{e}_i \pi / 4u)$. 
For simplicity, we use the notation $\rho$ and $\rho_{\pm}$ to denote $\rho(\boldsymbol{\theta})$ and $\rho_\pm(\bm \theta)$, respectively.
The variance can be expressed as
\begin{equation*}
    \Var_{U\sim \cE}~\partial_{\theta_i}C = c^2u^2 \bE_{U\sim \cE}~|\Tr[g\rho_+ - g\rho_-]|^2.
\end{equation*}
By the translation invariance of the ensemble $\mathcal{E}$, this expression simplifies to
\begin{equation}
    \Var_{U\sim \cE}~\partial_{\theta_i}C= 2c^2u^2 (1-r) \bE_{U\sim \cE}~[\Tr(g\rho)]^2.
\label{eq:norm-grad}
\end{equation}
where the ratio 
\begin{equation*}
    r:= \bE_{U\sim \cE}~\Tr[g\rho_+]\Tr[g\rho_-]/\bE_{U\sim \cE}~[\Tr(g \rho)]^2
\end{equation*}
is the \emph{Pearson correlation coefficient}, which quantifies the linear correlation between the measurement outcomes on the two shifted states. In certain cases, $r$ can approach $1$ exponentially with the number of qubits $n$, becoming a significant factor of BPs.
{We interpret $(1-r)$ as a \emph{parameter-shift distinguishability} factor: $r\to 1$ means the two shifted circuits become statistically indistinguishable on the measured observable, leading to vanishing gradients even when $\bE[\Tr(g\rho)]^2$ does not decay exponentially.}
{If $\bE[\Tr(g\rho)]^2$ decays exponentially while $(1-r)$ remains order one, the dominant mechanism is observable concentration.}
{If $\bE[\Tr(g\rho)]^2$ remains non-negligible but $(1-r)$ decays exponentially, the BP is instead caused by loss of parameter sensitivity inside the circuit.}
{Examples~\ref{ex:1} and~\ref{ex:2} illustrate how this correlation factor appears in independent-layer and correlated-layer settings, respectively.}

The quantity $\bE_{U \sim \mathcal{E}} \left[ \Tr(g \rho) \right]^2$ in \cref{eq:norm-grad} characterizes the variance of the observable $g$, given that its expectation vanishes under the locally scrambling ensemble. This term reflects the degree of concentration of measurement outcomes induced by the ensemble at the end of the circuit. While prior work \cite{ragone2024LieAlgebraicTheoryBarrenPlateausDeep} has attributed the vanishment in the variance of observable to BPs, the contribution of the Pearson correlation coefficient—particularly how it is affected by the inner structures of the ansatz—remains largely unexplored.

{We first study when the observable second-moment factor $\bE[\Tr(g\rho)]^2$ becomes exponentially small (observable concentration).}

We begin by demonstrating how BPs can arise from concentration phenomena in end-of-circuit measurement statistics. For a locally scrambling ensemble $\cE$ and a state $\rho = U \rho_0 U^\dg$ with $U \sim \cE$, we derive the following bound on the observable variance of Paulis.

\begin{theorem}[Observable Concentration]\label{thm:obs_conc}
For a locally scrambling ensemble $\cE_{\rm LS}$ and a state $\rho = U \rho_0 U^\dg$ with $U \sim \cE_{\rm LS}$, let $g$ be a Pauli operator with support $A\subseteq [n]$.
The second moment of the observable $g$ is bounded by:
\begin{align}
 \bE_{U\sim \cE_{\rm LS}}~[\Tr(g\rho)]^2 \le \ltwo{g}^2\left(\frac{2}{3}\right)^{|A|} \bE_{U\sim\cE_{\rm LS}}~D_{\rm HS}^2(\rho_A),\label{eq:end-circ-statis}
\end{align}
{where $\ltwo{g}=1$ is the operator norm of the Pauli operator $g$, $\rho_A$ is the reduced density matrix of $\rho$ on the qubits in $A$, and $D_{\rm HS}^2(\rho_A):=\Vert{\rho_A-\id/2^{|A|}}\Vert_{\rm F}^2=\Tr[\rho_A^2]-2^{-|A|}$ is the purity excess over the maximally mixed state.}
\end{theorem}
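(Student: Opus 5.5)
Since $g$ is supported on $A$, we have $\Tr(g\rho)=\Tr(g\rho_A)$, where we write $g=\bigotimes_{j\in A}\sigma_j$ with each $\sigma_j$ a non-identity single-qubit Pauli. We then exploit the local scrambling property of \cref{def:local-scramb}: because $(\bigotimes_j V_j)U\sim\cE_{\rm LS}$ for independent uniformly random single-qubit Cliffords $V_j$, we may replace the average over $\cE_{\rm LS}$ by a joint average over $U\sim\cE_{\rm LS}$ and over the $V_j$. This gives
\begin{equation*}
\bE_{U}[\Tr(g\rho)]^2=\bE_{U}\,\bE_{V}\Big[\Tr\Big(\big(\textstyle\bigotimes_{j\in A}V_j^\dg\sigma_jV_j\big)\rho_A\Big)\Big]^2 .
\end{equation*}
Qubits outside $A$ can be ignored, since the Cliffords there act trivially on $\Tr(g\rho_A)$.

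The key step is the one-qubit Clifford twirl of $\sigma\otimes\sigma$. The single-qubit Clifford group maps a fixed non-identity Pauli uniformly, up to sign, onto $\{X,Y,Z\}$, and the sign cancels in the square. Hence
\begin{equation*}
\bE_{V}(V^\dg\sigma V)^{\ot 2}=\tfrac13(X\ot X+Y\ot Y+Z\ot Z).
\end{equation*}
Taking the tensor product over $j\in A$, the averaged square becomes
\begin{equation*}
3^{-|A|}\sum_{P\in\{X,Y,Z\}^{A}}\Tr(P\rho_A)^2 .
\end{equation*}
This is a sum over \emph{full-weight} Paulis on $A$ only.

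Next we bound this sum by the purity. Expanding $\rho_A$ in the Pauli basis, $\rho_A=2^{-|A|}\sum_{P}\Tr(P\rho_A)P$, gives
\begin{equation*}
\Tr\rho_A^2-2^{-|A|}=2^{-|A|}\sum_{P\neq \id}\Tr(P\rho_A)^2 .
\end{equation*}
Since the full-weight Paulis form a subset of the nonidentity Paulis and all terms are nonnegative, we get
\begin{equation*}
\sum_{P\in\{X,Y,Z\}^A}\Tr(P\rho_A)^2\le 2^{|A|}D_{\rm HS}^2(\rho_A).
\end{equation*}
Combining the two displays yields $(2/3)^{|A|}\,\bE_U D_{\rm HS}^2(\rho_A)$. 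For a general $g=cg'$ the prefactor $\ltwo{g}^2$ enters by homogeneity.

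The main points to verify carefully are, first, that local scrambling indeed permits inserting independent random local Cliffords: the definition only gives closure under a fixed Clifford product, so we average the identity $\bE_U f(U)=\bE_U f((\bigotimes V_j)U)$ over uniform $V_j$. Second, we must check the twirl formula, in particular that the diagonal-only structure arises because cross terms like $X\ot Y$ cancel under the sign averaging. Neither step is deep; the twirl computation is the crux.
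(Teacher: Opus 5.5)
Your proposal is correct and is essentially the paper's own proof. The paper's observation that local Clifford invariance forces all second moments $\bE\,\lambda_k^2$ of Paulis with the same support to coincide is exactly your twirl identity $\bE_V (V^\dg\sigma V)^{\ot 2}=\frac{1}{3}(X\ot X+Y\ot Y+Z\ot Z)$, and the paper likewise drops the lower-weight terms in the Pauli expansion of $\Tr\rho_A^2$ to obtain the factor $(2/3)^{|A|}$. (Minor point: no cross terms such as $X\ot Y$ ever appear in $(V^\dg\sigma V)^{\ot 2}$, since both copies carry the same signed Pauli, so that check is automatic.)
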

{A generalization to Hermitian operators, together with the proof of the theorem, is provided in the Methods section.
We remark that \Cref{thm:obs_conc} significantly generalizes standard concentration results: it does not rely on the assumption of global Haar randomness or deep circuits, but requires only local scrambling.
This implies that OC can occur even in shallow or intermediate-depth circuits, provided they satisfy the local scrambling property.}
By \Cref{thm:obs_conc}, observable concentration (OC) can be attributed to two main factors: (1) high entanglement in the end-of-circuit state, which causes the reduced state on subsystem $A$ to approach the maximally mixed state. In the context of quantum neural networks, this effect manifests through the growth of hidden-layer dimensions~\cite{ortizmarreroEntanglementInducedBarrenPlateaus2021}; and (2) the use of global cost functions~\cite{cerezoCostFunctionDependentBarrenPlateausShallow2021,uvarovBarrenPlateausCostFunctionLocalityVariational2021}, which is indicated by the exponent $|A|$ in \Cref{eq:end-circ-statis}. We also note that noise-induced BPs~\cite{wangNoiseinducedBarrenPlateausVariationalQuantumAlgorithms2021} are likely a consequence of OC, which extends beyond the scope of this work. In the following, we investigate how the internal structure of the ansatz contributes to the emergence of BPs, beyond the effects accounted for by OC.
{Thus entanglement-induced concentration, hidden-layer purity effects, and global-cost concentration all enter the same factor in \cref{eq:norm-grad}, whereas the correlation factor captures a distinct loss of parameter sensitivity.}

\section*{BPs in Independent Layers}

\begin{figure}[t]
    \centering
    \includegraphics[width=0.7\linewidth]{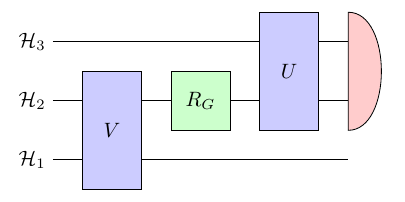}
    \caption{Circuit models to demonstrate mid-circuit response. $\cH_{1},\cH_2$ and $\cH_3$ denote different Hilbert spaces with dimensions $d_1$, $d_2$, and $d_3$ respectively. The measurement is taken within $\cH_2\ot\cH_3$.}
    \label{fig:mid-circ-example1}
\end{figure}

\begin{figure}[t]
    \centering
    \includegraphics[width=0.95\linewidth]{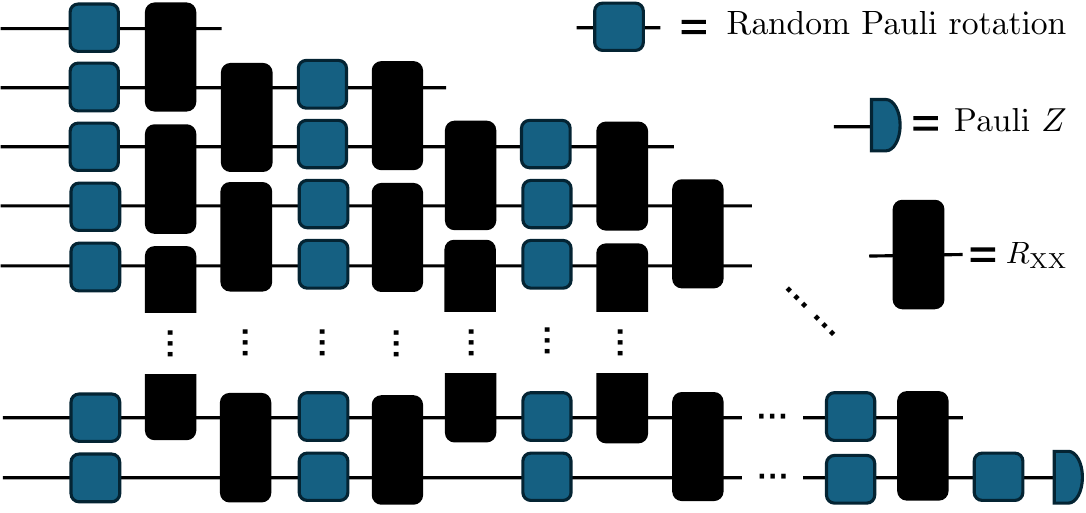}
    \caption{{Hierarchical tree circuit model.}
    This quantum circuit consists of $L$ layers of single-qubit and two-qubit parameterized gates.
    The qubits are indexed sequentially from bottom to top, denoted as $1,2,...,n$.
    Similarly, the layers are labeled consecutively from left to right, labeled as $1,2,...,L$.}
    \label{fig:tree-circ}
\end{figure}

{As the entire process of PQCs involves adaptive updates to the parameters within the circuit, a detailed examination of the mid-circuit behavior within PQCs is essential. To formalize this, we analyze the derivative of a parameterized gate of the form $\exp(-i G \theta)$, where $G$ is a Pauli operator. This gate is positioned between a forward circuit block $U$ and a backward circuit block $V$. In this section, the unitaries $U$ and $V$ are independently drawn from unitary ensembles $\mathcal{E}_U$ and $\mathcal{E}_V$, respectively. For generality, we assume only that $\mathcal{E}_V$ is locally scrambling, while $\mathcal{E}_U$ remains a flexible choice within our model. We begin by considering cost functions of the form $\Tr[O\rho]$, where $\rho$ denotes the state prepared by the circuit, and subsequently extend the analysis to a batch-decomposed formulation.}

To analyze the mid-circuit behavior of the circuit evolution, we start by examining the snapshot of the system state $\rho_V$ immediately following the application of unitary $V$. Let $\Gamma$ denote the light-cone subspace, defined as the set of qubits that back-propagate from the support of the observable through the circuit up to $V$. Specifically, this subspace can be expressed as $\lcs  = \supp(U^\dagger O U)$ with $O$ being an arbitrary observable. The gradient variance is determined by the reduced state of $\tilde{\rho}_V^{\otimes 2}$ on the light-cone region $\Gamma$. After simplification, we find that it admits the following upper bound 
\begin{equation}
  c^2 u^2 \ltwo{O}^2 \lone{\bE_{V\sim \cE_{\rm LS}}~ \Tr_{\neq \lcs}\left(\tilde{\rho}_V^{\ot 2}\right)},
  \label{eq:var_information_loss}
\end{equation}
where $\widetilde{\rho}_V = G \rho_{V} G - \rho_V$ and $\Vert \cdot\Vert_1$ denotes the trace norm. Further details and a full derivation of this result are provided in Methods.

{Before diving into the rigorous derivation, we can gain intuition from \Cref{fig:lightcone-concept} (a) and (b): strong entanglement across the light-cone boundary can transfer the effect of a local perturbation into degrees of freedom outside the measurement light cone.}
{In such cases, the reduced state within the light cone approaches maximally mixed, and the measurement becomes insensitive to the perturbation.}
{Operationally, we call this mechanism information loss because the degrees of freedom carrying the perturbation are excluded by the final measurement light cone.}
{This is distinct from information scrambling in a closed system: in scrambling the perturbation is still present globally, but is delocalized over the full operator space so that a fixed final observable can have exponentially small overlap with it.}
{The fixed observable in this sense need not be local; it may be, for example, a full-support Pauli operator chosen independently of the scrambling unitary.}
{Thus loss is caused by restriction to an accessible subsystem or light-cone region, whereas scrambling is caused by delocalization within the accessible closed-system dynamics.}

Specifically, let $\cP_\Gamma$ denote the set of Pauli basis on $\Gamma$. We define the effective Pauli set as $S(\Gamma,G):= \{\id, g|g\in\cP_\Gamma, gG + Gg = 0\}$ and introduce the effective reduced state $\varrho_{\lcs}(V) := 2^{-|\lcs|} \sum_{g\in S(G,\lcs)} g\Tr[g \rho_{\lcs}(V)]$. The corresponding information loss is then characterized by the Hilbert-Schmidt deviation of $\varrho_{\lcs}(V)$ from the maximally mixed state. 
{We emphasize that while $\varrho_{\lcs}$ is not generally equal to the reduced state $\rho_\Gamma$, it captures the specific information component recoverable by the gradient operator $G$.}
{By projecting onto the subspace of operators anti-commuting with $G$, $\varrho_{\lcs}$ provides a tighter bound than $\rho_\Gamma$ for locally scrambling ensembles, serving as a more precise statistical signature of trainability.}

\begin{theorem}[Information loss and barren plateaus]\label{obs:mid-circ}
Let $O$ be an arbitrary Hermitian operator and its expectation value be the cost function. Let $\lcs$ denote the light cone subspace after applying the locally scrambling unitary $V$.
The variance of the gradient is upper bounded as:
    \begin{align*}
        \Var\left(\partial_{\theta} C \right) \le u^2 \ltwo{O}^2 2^{|\lcs|+2}\bE_{V \sim \cE_{\rm LS}}~D_{\rm HS}^2[\varrho_{\lcs}(V)].
    \end{align*}
\end{theorem}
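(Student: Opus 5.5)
The approach is to bound the gradient pointwise for fixed $U,V$ by a Hölder-type inequality restricted to the light cone $\lcs$, and only then average over the ensemble. The Pauli structure of the generator $G$ will single out the anticommuting components that make up the effective state $\varrho_{\lcs}(V)$.

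\emph{Step 1 (parameter shift as a commutator).} I write the circuit as $U e^{-iG\theta} V$ acting on $\rho_0$, and absorb the rotation $e^{-iG\theta}$ into $\rho_V$. Since $G$ is a Pauli operator, the two shifted states in the parameter-shift rule satisfy
\begin{equation*}
e^{-iG\pi/4}\rho_V e^{iG\pi/4}-e^{iG\pi/4}\rho_V e^{-iG\pi/4}=-i[G,\rho_V].
\end{equation*}
Hence $\partial_\theta C = u\,\Tr\!\left(O' X\right)$ with $O':=U^\dg O U$ and $X:=-i[G,\rho_V]$, up to the constant normalisation of $u$ fixed in the Setup. By translation invariance in $\theta$, the overall rotation does not change the distribution of $\rho_V$. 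Moreover $\Var(\partial_\theta C)\le \bE(\partial_\theta C)^2$, so no mean computation is needed.

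\emph{Step 2 (restriction to the light cone).} $O'$ is supported on $\lcs=\supp(U^\dg O U)$. If $\supp(G)\not\subseteq \lcs$, then $G$ commutes with $O'$ and the derivative vanishes identically. So I may assume $\supp(G)\subseteq\lcs$. Then $\Tr(O'X)=\Tr(O'_{\lcs}X_{\lcs})$, where $X_{\lcs}=\Tr_{\neq\lcs}X=-i[G,\rho_{\lcs}(V)]$. Using $\lone{O'}$-duality and $\lone{Y}\le 2^{|\lcs|/2}\fbnorm{Y}$ on $\lcs$, I get
\begin{equation*}
|\partial_\theta C|^2\le u^2\ltwo{O}^2\, 2^{|\lcs|}\,\fbnorm{[G,\rho_{\lcs}(V)]}^2 .
\end{equation*}
This is the pointwise version of \cref{eq:var_information_loss}.

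\emph{Step 3 (Pauli bookkeeping).} I expand $\rho_{\lcs}=2^{-|\lcs|}\sum_{g\in\cP_\lcs} g\,\Tr[g\rho_{\lcs}]$. Only Paulis with $gG=-Gg$ survive the commutator, and each such term gives $[G,g]=2Gg$. By orthogonality of Paulis, $\fbnorm{Gg}^2=2^{|\lcs|}$, which gives
\begin{equation*}
\fbnorm{[G,\rho_\lcs]}^2 = 4\cdot 2^{-|\lcs|}\sum_{g\in S(\lcs,G)\setminus\{\id\}}\Tr[g\rho_\lcs]^2 .
\end{equation*}
The same orthogonality applied to $\varrho_\lcs(V)-\id/2^{|\lcs|}$ shows that this equals $4\,D_{\rm HS}^2[\varrho_\lcs(V)]$. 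The identity term of $S$ cancels exactly against the maximally mixed state. Combining with Step 2 and taking $\bE_V$ (and $\bE_U$, which only enters through $\lcs$) yields the stated bound with prefactor $u^2\ltwo{O}^2 2^{|\lcs|+2}$.

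I expect the main obstacle to be Step 2. The light cone $\lcs$ depends on $U$, so the averaging should be done with $\lcs$ fixed by the architecture, taken as the maximal possible support of $U^\dg O U$. It also has to be checked that the partial trace commutes with $[G,\cdot]$, which needs $G$ to lie inside $\lcs$; that is why the case $\supp(G)\not\subseteq\lcs$ is disposed of separately. The local scrambling of $V$ is not needed for the inequality itself; it only matters later, when one evaluates $\bE_V D_{\rm HS}^2[\varrho_\lcs(V)]$.
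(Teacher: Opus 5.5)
\paragraph{The gap: the straddling case.} In Step 2 you dispose of the case $\supp(G)\not\subseteq\lcs$ by claiming that $G$ then commutes with $O'=U^\dg O U$. That is true only when $\supp(G)\cap\lcs=\emptyset$. A generator that straddles the light-cone boundary generally does not commute with $O'$. This happens, e.g., for a two-qubit gate at the edge of the backward light cone of $U$, and the theorem, with $\lcs=\supp(U^\dg OU)$, does not exclude it.

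Concretely, take $G=Z_1Z_2$ and $O'=X_2$, so $\lcs=\{2\}$. At $\theta=0$,
\[
\partial_\theta C=-i\Tr([O',G]\rho_V)=-2\Tr(Z_1Y_2\,\rho_V),
\]
which equals $-2$ for $\rho_V=\ket{\psi}\bra{\psi}$ with $\ket{\psi}=(\ket{0}\ket{+i}+\ket{1}\ket{-i})/\sqrt{2}$. Your restriction argument cannot absorb this case either. Here $\Tr_{\neq\lcs}[G,\rho_V]$ is not a commutator on $\lcs$, and the marginal of this $\rho_V$ on qubit $2$ is maximally mixed. So the effective state built from the unshifted state has $D_{\rm HS}^2=0$, while the gradient is of order one.

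The same example shows that your Step 1 remark (absorbing $e^{-iG\theta}$ ``does not change the distribution of $\rho_V$'') is not what translation invariance of $\theta$ gives. It is harmless only when $\supp(G)\subseteq\lcs$. In that case the span of $\lcs$-Paulis anticommuting with $G$ is invariant under conjugation by $e^{-iG\phi}$, so $D_{\rm HS}^2[\varrho_\lcs]$ does not depend on the angle.

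\paragraph{The fix, and how your route then compares with the paper's.} What is missing is to express the shift difference through the anticommuting Paulis themselves rather than through $Gg$. Since $e^{-iG\pi/2}=-iG$, one has exactly $\rho_+=G\rho_-G$. Hence
\[
\rho_+-\rho_-=G\rho_-G-\rho_-=-\tfrac{2}{d}\sum_{g:\,gG=-Gg}\Tr(g\rho_-)\,g .
\]
This is the paper's $\tilde\rho_V$, with $\rho_-$ playing the role of the snapshot via translation invariance of $\theta$. Wherever $G$ sits, its partial trace onto $\lcs$ is $-2\cdot 2^{-|\lcs|}\sum_{g\in S(\lcs,G)\setminus\{\id\}}\Tr(g\rho_-)\,g$, with squared Frobenius norm $4D_{\rm HS}^2[\varrho_\lcs(\rho_-)]$. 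Your H\"older and Cauchy--Schwarz steps then go through verbatim. In the example above, $\Tr(X_2\rho_-)=\Tr(Z_1Y_2\rho_V)=1$ and the bound is tight.

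With this repair your argument is valid and takes a genuinely different route. The paper averages first. It expands $\bE_V\Tr_{\neq\lcs}\tilde\rho_V^{\ot 2}$ in the Pauli basis and uses local scrambling of $V$ to kill the cross terms. It then bounds the trace norm by $2^{|\lcs|}$ times the Frobenius norm and relaxes $\sqrt{\sum_j(\bE_V\lambda_j^2)^2}\le\sum_j\bE_V\lambda_j^2$. Your pointwise bound reaches the same constant $2^{|\lcs|+2}$ without local scrambling or independence of $U$ and $V$, which confirms your closing remark. The only thing the paper's averaged route buys is the sharper but discarded intermediate bound $4u^2\ltwo{O}^2\sqrt{\sum_j(\bE_V\lambda_j^2)^2}$.
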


{Notably, \Cref{obs:mid-circ} does not rely on Haar randomness for either $U$ or $V$; it is sufficient that the circuit $V$ exhibits local scrambling while remaining independent of the distribution of $U$. This condition highlights the irreversible character of information loss, which cannot be remedied by any post-processing unitaries. This also suggests that BPs can emerge even in the absence of OC if we choose $U$ carefully enough to maintain non-concentrated end-circuit statistics.}
{Indeed, information loss provides a concrete mid-circuit mechanism to realize BPs without OC.}
To demonstrate this, we must isolate the factor of OC from the gradient variance, which reveals the key quantity $r$ as shown in \Cref{eq:norm-grad}. We then present several concrete examples to elaborate on this.

Consider the circuit structure shown in Fig.~\ref{fig:mid-circ-example1}, where the unitary $V$ is drawn from a unitary 2-design, the operator $G$ is a Pauli observable with support restricted within the subspace $\mathcal{H}_2$, i.e., $\supp(G) \subset \mathcal{H}_2$.
{We assume the initial state is a product state $\rho_0 = \rho_{12} \otimes \rho_3$, and $U$ acts trivially on the partition boundary such that separability is maintained up to $V$.}
Due to the circuit geometry, only measurements performed on $\cH_2$ and $\cH_3$ can detect changes in the parameter $\theta$. Focusing on the snapshot of the system after applying the unitary $V$, the light cone subspace is $\cH_2 \ot \cH_3$ and the state can be written as $\rho_V = \rho_{V}^{(1,2)}\ot \rho_V^{(3)}$, where the superscripts indicate different subspaces. According to \Cref{obs:mid-circ}, we obtain the effective reduced state as $\varrho_{\lcs}=2^{-|\lcs|} \sum_{g_{i}\ot g_{j}\in\cP(G,\lcs)} g_i^{(2)}\ot g_j^{(3)} \Tr\left[g_i^{(2)} \rho^{(2)}_V\right]\Tr\left[g_i^{(3)} \rho^{(3)}_V\right]$. As the state is of product form between $\cH_2$ and $\cH_3$, we can write the deviation in a simpler form and disregard the subsystem $\cH_3$ through the inequality, i.e., $D_{\rm HS}^2(\varrho_{\lcs}) = D_{\rm HS}^2\left[\rho_{V}^{(2)}\right] \Tr[\rho^{(3)2}]\le D_{\rm HS}^2\left[\rho_{V}^{(2)}\right]$. Now the result becomes clear: the variance of gradient is connected to the Hilbert-Schmidt deviation of the reduced state in $\cH_2$ from the maximally mixed state, which can be easily achieved for any $V$ that is random enough. The construction in the following example, obtained by taking $U$ and $V$ from a unitary 2-design, serves as a special case of Theorem~\ref{obs:mid-circ}.

\begin{figure*}[t]
    \centering
    \includegraphics[width=0.95\linewidth]{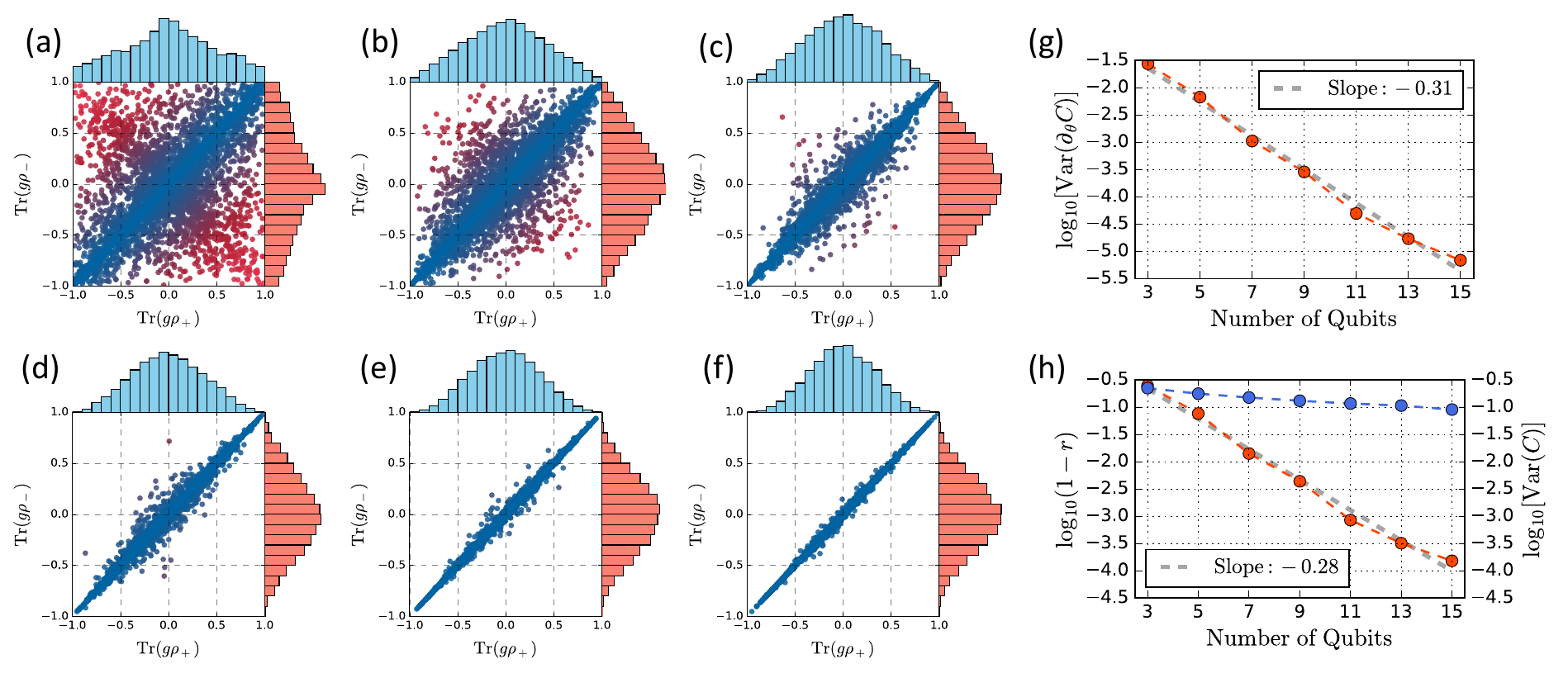}
    \caption{Hierarchical tree-circuit trainability diagnostics via parameter-shift statistics.
    (a)-(f) present scatter plots of shifted terms in circuits containing $3,5,7,9,11$, and $13$ qubits, respectively.
    (g) illustrates how gradient variance vanishes with increasing number of qubits, with a gray dashed line representing the linear fit in logarithmic scale.
    (h) displays the comparison between $(1-r)$ and the observable variance, accompanied by a fitted straight line for $(1-r)$ in logarithmic scale as well.}
    \label{fig:res-tree-circ}
\end{figure*}

\begin{example}\label{ex:1}
   For the circuit configuration shown in Fig.~\ref{fig:mid-circ-example1}, let $U$ and $V$ be unitary 2-design and assume the dimensional constraint $d_2 \leq d_3/2$. Under these conditions, the Pearson correlation coefficient admits the bound $1-r = \mathcal{O}(d_2^2/d_1)$.
\end{example}

Detailed calculation can be found in Methods. As shown in Example~\ref{ex:1}, the Pearson correlation coefficient satisfies $1 - r = \mathcal{O}(d_2^2 / d_1)$, which leads to barren plateaus with $d_1$ increasing exponentially. Meanwhile, the variance associated with the observable scales as $\mathcal{O}\left(d_2^{-2} d_3^{-1}\right)$ and is independent of $d_1$. This indicates that the component of the gradient variance induced by $d_1$ cannot be inferred from the measurement statistics of the observable alone.
{This example gives a tunable separation between the two mechanisms.}
{For fixed $d_2$, increasing $d_1$ leads to $r\rightarrow 1$ without changing the observable variance, whereas increasing $d_3$ suppresses the observable variance.}
{The value $d_1$ therefore isolates information-loss-induced suppression, while $d_3$ mainly contributes to observable concentration.}

In machine learning applications, training data is commonly divided into mini-batches to enhance computational efficiency and model robustness. This mini-batch learning approach has proven effective in both classical and quantum domains \cite{chen2021FederatedQuantumMachineLearning,sajjan2022QuantumMachineLearningChemistryPhysics}. We extend \cref{obs:mid-circ} by introducing a decomposition of a Hermitian operator into Pauli batches. Let $\{\mathcal{P}_j\}_{j=1}^K$ represent a partition of the Pauli set $\mathcal{P}$ into $K$ disjoint subsets. For a Hermitian operator $O$, we define its $K$-Pauli batch decomposition as $O = \sum_{j=1}^K B_j$, where the $j$-th batch $B_j = \sum_{g \in \mathcal{P}_j} c_g g$ comprises Pauli operators $g$ with corresponding coefficients $c_g$. This decomposition yields the following corollary:

\begin{corollary}[Batch-decomposed form of \Cref{obs:mid-circ}]
    {For the cost function in the form $C = \Tr\left[UR_G(\theta)V\rho_0V^\dg R_G(\theta)^\dg U^\dg O\right]$ with independently sampled $U,V\in\cE_{\rm LS}$ and the Hamiltonian in the $K$-Pauli decomposition form, i.e., $H = \sum_{j=1}^K B_j$, the variance of gradient can be bounded as}
    \begin{align*}
        \Var(\partial_\theta C) \le \sum_{j=1}^K u^2 \ltwo{B_j}^2 2^{|\lcs_j|+2} \bE_{V \sim \cE_{\rm LS}}~D_{\rm HS}^2[\varrho_{\lcs_j}(V)],
    \end{align*}
    where $\lcs_j$ is the light-cone subspace of $j$-th batch, i.e., $\lcs_j=\supp(U^\dg B_j U)$.
    \label{cor:batch_decomp_form}
\end{corollary}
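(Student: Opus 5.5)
Write the gradient as a sum over batches, $\partial_\theta C=\sum_{j=1}^K \partial_\theta C_j$ with $C_j:=\Tr[U R_G(\theta) V\rho_0 V^\dg R_G(\theta)^\dg U^\dg B_j]$. Each $C_j$ is a cost function of exactly the type covered by \Cref{obs:mid-circ}, with $O$ replaced by $B_j$ and the light cone $\lcs$ replaced by $\lcs_j=\supp(U^\dg B_j U)$. The plan is therefore to split the variance into per-batch contributions and apply \Cref{obs:mid-circ} to each one.

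The key observation is that the cross terms vanish in expectation. The derivative involves $\tilde\rho_V=G\rho_V G-\rho_V$ through the commutator structure, so $\partial_\theta C_j=u\,\Tr[U^\dg B_j U\,\tilde X_V]$ for an operator $\tilde X_V$ linear in $\rho_V$. Since the mean gradient is zero, the variance equals the second moment:
\begin{align*}
\bE(\partial_\theta C)^2=\sum_{j,k}\bE_{U,V}\,\Tr\!\left[(U^\dg B_jU)\ot(U^\dg B_kU)\,\tilde X_V^{\ot 2}\right].
\end{align*}
Averaging over $V$ first uses local scrambling, that is, conjugation by random single-qubit Cliffords $\bigotimes_l V_l$. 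This twirl makes $\bE_V \tilde X_V^{\ot2}$ diagonal in the Pauli basis, of the form $\sum_P a_P\, P\ot P$. After conjugation by $U$, each $U^\dg B_j U$ is no longer a sum over the Pauli set $\mathcal P_j$, so diagonality alone does not kill the cross terms.

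This is where I expect the main obstacle. My first attempt would be to use that $U\in\cE_{\rm LS}$ is also locally scrambling, so twirling with single-qubit Cliffords on the output side makes $\bE_U (U^\dg\ot U^\dg)(g\ot h)(U\ot U)$ vanish unless $g$ and $h$ share the same Pauli support pattern up to local equivalence. That still does not separate distinct Paulis with equal support. If this fails, I would fall back on Cauchy--Schwarz. Writing $X_j=\partial_\theta C_j$, we have $\bE(\sum_j X_j)^2\le K\sum_j \bE X_j^2$ for arbitrary correlations. This yields the stated bound up to a factor $K$, or exactly the stated bound once orthogonality of the batch contributions, $\bE X_jX_k=0$ for $j\ne k$, is established from the twirl. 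I would aim to prove that orthogonality by showing that the $V$-twirl together with the independence of $U$ and $V$ reduces each cross term to $\sum_P a_P\,\bE_U\Tr[U^\dg B_jU P]\Tr[U^\dg B_kUP]$. I would then argue that a local-Clifford twirl applied to $U$ from the left, which is permitted since $U\sim\cE_{\rm LS}$, makes this vanish whenever $\mathcal P_j\cap\mathcal P_k=\emptyset$.

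Once the cross terms are disposed of, each diagonal term $\bE X_j^2=\Var(\partial_\theta C_j)$ is bounded directly by \Cref{obs:mid-circ}, giving $u^2\ltwo{B_j}^2 2^{|\lcs_j|+2}\bE_V D_{\rm HS}^2[\varrho_{\lcs_j}(V)]$. Summing over $j$ gives the corollary. The point of the batch form is then visible: each batch only pays the factor $2^{|\lcs_j|}$ of its own light cone, rather than the factor of the light cone of the whole $H$.
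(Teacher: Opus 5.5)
Your final plan is the paper's proof. You expand the variance over pairs of batches, kill the $j\neq k$ terms by Pauli mixing coming from the local scrambling of $U$, and bound each diagonal term by \Cref{obs:mid-circ} with $O=B_j$. Because the local Cliffords are absorbed on the output side of $U$, in the Heisenberg picture they twirl $B_j\otimes B_k$ directly. However, the step you flag as the main obstacle is not one, and your reason for doubting it is wrong. A local-Clifford twirl does separate distinct Pauli strings with the same support, not merely different support patterns. Local Pauli strings $P$ are themselves tensor products of single-qubit Cliffords, so $\mathcal{E}_{\rm LS}$ is invariant under $U\mapsto PU$, and you may average over $P$:
\begin{equation*}
\frac{1}{4^n}\sum_{P}(PgP)\otimes(PhP)=\left(\frac{1}{4^n}\sum_{P}\chi_P(g)\,\chi_P(h)\right) g\otimes h=\delta_{g,h}\; g\otimes g .
\end{equation*}
Here $\chi_P(g)=\pm1$ is the commutation sign of $P$ with $g$. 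For $g\neq h$ we have $\chi_P(g)\chi_P(h)=\chi_P(gh)$, where $gh$ is, up to phase, a non-identity Pauli string. Exactly half of all $P$ anticommute with it, so the average is zero. Hence
\begin{equation*}
\mathbf{E}_U\,\mathrm{Tr}\!\left[(U^\dagger B_jU)\otimes(U^\dagger B_kU)\,Y\right]=0
\end{equation*}
for any $U$-independent $Y$ whenever $\mathcal{P}_j\cap\mathcal{P}_k=\emptyset$. In particular this holds for $Y=\mathbf{E}_V\tilde X_V^{\otimes 2}$, which is independent of $U$ because $U$ and $V$ are sampled independently. This is exactly the paper's cross-term step.

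Consequently the detour through the $V$-twirl is superfluous. Your last argument (the left local-Clifford twirl on $U$ killing the reduced cross term) silently uses precisely this fact, which contradicts your earlier worry. The Cauchy--Schwarz fallback should be discarded: it only gives the bound with an extra factor $K$, which is not the stated corollary. The same Pauli average also justifies your unproved claim that the mean gradient vanishes. The average $4^{-n}\sum_P PB_jP$ is proportional to the identity, and $\tilde X_V$ is traceless, so $\mathbf{E}\,\partial_\theta C_j=0$ and each diagonal term is indeed $\mathrm{Var}(\partial_\theta C_j)$.
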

A detailed proof of this corollary is provided in Methods.

\section*{BPs in Correlated Layers}

{We next give a complementary correlated-layer example.}
{Specifically, we consider the echo-type circuit illustrated in \cref{fig:mid-circ-example2}, where the backward block is fixed to be $U^\dagger$ and is therefore maximally correlated with the forward block $U$.}
{This setting is reminiscent of information scrambling \cite{xuScramblingDynamicsOutofTimeOrderedCorrelatorsQuantumManyBody2024}: the parameter perturbation remains globally present, but it is spread over the operator space so that a fixed $U$-independent observable can have a small overlap with it.}
{Thus, unlike the independent-layer information-loss setting above, the suppression here does not arise from discarding degrees of freedom outside the measurement light cone, but from scrambling-induced indistinguishability of the two parameter-shift evaluations.}
\begin{figure}[h]
    \centering
    \includegraphics[width=0.7\linewidth]{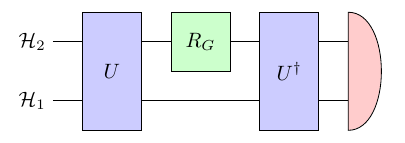}
    \caption{Circuit models to demonstrate the impact of information scrambling on BPs. The measurement is applied within the joint space $\cH_1\ot\cH_2$.}
    \label{fig:mid-circ-example2}
\end{figure}
{We consider a circuit model depicted in \Cref{fig:mid-circ-example2}. The model comprises a parameterized gate $R_G(\theta)$ sandwiched between two correlated unitary blocks, $U$ and $U^{\dagger}$, that operate on a $d$-dimensional quantum system. Such echo-type architectures are widely applied in tasks such as unknown unitary learning, scrambling-based quantum metrology \cite{kobrin2024UniversalProtocolQuantumEnhancedSensingInformationScramblinga}, and quantum autoencoders \cite{romero2017QuantumAutoencodersEfficientCompressionQuantumData,bondarenko2020QuantumAutoencodersDenoiseQuantumData}.}
{For clarity we fix the probed parameter at $\theta=0$ (so $R_G(\theta)=\id$) while averaging over the random unitary $U$ (or, equivalently, over the remaining circuit parameters).}
{This isolates scrambling-driven indistinguishability in the parameter-shift evaluations and should not be interpreted as focusing on an isolated critical point of a fixed deterministic landscape.}
{For this circuit, the observable second moment remains non-concentrated; the calculation in Methods then shows that the Pearson correlation factor satisfies $1-r=\mathcal{O}(1/d)$.}

{Equivalently, since $d=2^n$, the parameter-shift signal becomes exponentially less detectable as the system size grows. Our findings are summarized as follows.}

\begin{example}\label{ex:2}
   {For the circuit architecture depicted in Fig.~\ref{fig:mid-circ-example2} with $U$ drawn from a unitary $4$-design, the Pearson correlation coefficient satisfies $1 -r = \mathcal{O}\left( \frac{1}{d}\right)$, and the corresponding gradient variance scales as $1 - r$.}
\end{example}

The detailed calculation is presented in Methods.
Thus the gradient variance vanishes at the same order as $1-r$, while the observable second moment remains non-concentrated.
The example shows within \cref{eq:norm-grad} that a perturbation which remains globally present can be scrambled over a high-dimensional operator space, so that a fixed $U$-independent Pauli observable, even one with full support, may fail to distinguish the two parameter-shift evaluations.
Accordingly, the example provide a complementary perspective to the independent-layer information-loss setting: the suppression arises from scrambling-induced indistinguishability rather than from discarding degrees of freedom outside the measurement light cone.

\section*{Numerical Simulation}
{We numerically test our theoretical predictions using the hierarchical tree circuit architecture in \Cref{fig:tree-circ}, which is inspired by quantum convolutional neural network (QCNN) constructions~\cite{pesahAbsenceBarrenPlateausQuantumConvolutionalNeural2021} but has linear depth in the number of qubits.}
{Our purpose is therefore not to revisit QCNN trainability guarantees, but to use this controlled linear-depth hierarchical architecture to isolate the mid-circuit correlation mechanism ($r\to 1$) predicted by \cref{eq:norm-grad}.}
{The example was chosen using the information-loss criterion above: the central parameter has access to a growing upstream environment as $n$ increases, while the final observable remains confined to the last register.}
{In this geometry, the perturbation generated by the central gate is expected to leak into degrees of freedom that are not probed by the final measurement, so information loss should dominate.}

As illustrated in \Cref{fig:tree-circ}, the circuit model comprises $L$ layers of single-qubit gates $R_P^{(i,j)}(x)=\exp\left(-i x P/2\right)$, where $P$ is randomly selected from Paulis $\{X,Y,Z\}$, and two-qubit gates $R_{XX}^{(i,j)}(y)=\exp(-i y X\ot X)$. 
The indices $(i,j)$ on the superscript indicate that the gate is applied at $i$-th qubit and $j$-th layers. The complete circuit evolution can be written as $U_P^{(L+1)}\prod_{k=1}^{L} U_{XX}^{(k)} U_P^{(k)}$, where $U_P^{(k)}=\prod_{j=1}^{L-k+2} R_P^{(j,k)}$ and $U_{XX}^{(k)}=\prod_{j=1}^{L-k+1}R_{XX}^{(j,k)}$.
We let the observable be the Pauli $Z$ acting on the last register and the input state be the zero state $\ket{00\dots 0}$.
To analyze the mid-circuit dynamics, we focus on the derivative of the central single-qubit gate, i.e., $R_P^{(L/2,L/2)}$, and fix the number of layers $L=n-1$ with $n$ being the number of qubits.

{We sample all gate parameters independently and uniformly from $[0,2\pi)$ and evaluate the gradient via the parameter-shift rule, which yields two shifted terms $\Tr[g\rho_{+}]$ and $\Tr[g\rho_{-}]$.}
The results are shown in \Cref{fig:res-tree-circ}.
Panels (a)--(f) display scatter plots of $\Tr[g\rho_{+}]$ versus $\Tr[g\rho_{-}]$ (with marginal histograms) for increasing $n$.
{As $n$ grows, the two shifted terms become increasingly linearly dependent, approaching $\Tr[g\rho_{+}]\approx \Tr[g\rho_{-}]$, which is quantified by the Pearson correlation coefficient $r\to 1$.}
Panels (g) and (h) report the gradient variance, $r$, and the observable variance on logarithmic scales.
{Notably, the decay of $\log_{10}(1-r)$ closely matches the decay of the gradient variance, whereas the observable variance does not exhibit a comparable decrease.}
{This behavior indicates barren plateaus without OC, consistent with gradient suppression driven by mid-circuit information loss through the increasing correlation between the two shifted evaluations.}

\section*{Conclusion and Discussion}

{We have shown that observable concentration does not fully explain trainability issues in PQCs: the gradient variance naturally decomposes into an end-of-circuit observable second-moment term and a parameter-shift distinguishability term $(1-r)$.}
{In particular, we identify two mid-circuit settings in which BPs can arise without observable concentration: independent forward--backward layers, where information loss provides the main mechanism analyzed in our bounds and numerical example; and correlated forward--backward layers, where an echo-type scrambling construction gives a complementary correlation-factor suppression result.}
{This separation reveals a subtle but practically important regime: the objective may display noticeable fluctuations across random parameters, yet gradients with respect to specific parameters can be exponentially suppressed because the corresponding perturbations become inaccessible to the final measurement or are scrambled into directions poorly resolved by the chosen observable.}
{From this viewpoint, BPs reflect a limitation of the circuit's ability to transmit parameter dependence to the measured degrees of freedom, rather than solely a property of the output statistics.}
{Because $r$ can be estimated from the same parameter-shift evaluations used to compute gradients, it provides an experimentally accessible diagnostic of sensitivity loss that is complementary to end-of-circuit variance (or purity) diagnostics.}
{This perspective is also complementary to Lie-algebraic, purity- and entanglement-based, and global-cost approaches: those frameworks characterize accessible operator directions, reduced-state concentration, or observable concentration caused by the support size of the cost, whereas $(1-r)$ captures when such directions become practically indistinguishable due to mid-circuit information dynamics.}

Our results also offer a complementary perspective on parameter redundancy and effective model capacity~\cite{mehendale2023ExploringParameterRedundancyUnitaryCoupledClusterAnsatze,haug2021CapacityQuantumGeometryParametrizedQuantumCircuits}. 
Parameters whose influence is lost or scrambled contribute little to optimization, effectively reducing the dimensionality of the trainable parameter space even when the nominal parameter count is large.

Several directions follow naturally. First, it will be valuable to develop practical diagnostics that detect mid-circuit information loss or excessive scrambling during training, and to translate these diagnostics into architecture or initialization principles that preserve parameter sensitivity. Second, understanding the interplay with noise remains an open issue: realistic noise may further erase propagating perturbations, but it may also suppress scrambling or modify the effective causal structure in ways that are not captured by idealized models. Clarifying these effects is important for predicting trainability on hardware and for designing scalable variational algorithms that can realize reliable performance improvements.

\section*{Methods}
\subsection*{Proof of \Cref{thm:obs_conc}}\label{app:proof-end-circ-statis}

\begin{proof}
    We first write the output state in the Pauli basis:
    \begin{align*}
        \rho = U(\bthe) \rho_0 U(\bthe)^\dg =  \frac{1}{d}\left(\id+\sum_{k=1}^{d^2-1} \lambda_k g_k\right),
    \end{align*}
    where $\id$ is the identity operator, $\{g_k\}$ are the Pauli operators, and $\lambda_k\in \mathbf{R}$ are expectation values of the Pauli operators, i.e., $\lambda_k = \Tr(\rho g_k)$.
    Then we have
    \begin{align*}
        \bE_{U\sim\cE}~\lambda_k^2=\bE_{U\sim\cE}~[\Tr(g_k U\rho_0 U^\dg)]^2,
    \end{align*}
    which can be further written as follows, according to the locally scrambling properties:
    \begin{align*}
        \bE_{U\sim\cE}~\Tr(g_k^{\ot 2}\rho^{\ot 2})=\bE_{U\sim\cE}~\Tr(C^{\dg\ot2} g_k^{\ot 2} C^{\ot 2} \rho^{\ot 2}),
    \end{align*}
    with $C$ being a tensor product of arbitrary single-qubit Clifford operations.
    The single-qubit Clifford operations can map the operator $g$ to arbitrary Pauli operators while preserving its support. We then have
    \begin{align*}
        \bE_{U\sim\cE}~\Tr(\rho^{2}) &= \frac{1}{d}\left[1+\sum_{k=1}^{d^2-1} \bE_\cE~({\lambda_k^2}) \right]\\
        &= \frac{1}{d}\left[1 + \sum_{S\subset [n]} \sum_{k:~\supp(g_k)=S}\bE_\cE~({\lambda_k^2})\right],
    \end{align*}
    where all the $\bE_\cE~({\lambda_k^2})$ with $\supp(g_k)=S$ must has the same average.
    We then have
    \begin{align*}
        \bE_{U\sim\cE}~\Tr(\rho^{2}) = \frac{1}{d}\left[1 + \sum_{S\subset [n]}3^{|S|}\bE_\cE~({\lambda_S^2})\right].
    \end{align*}
    Let $P_A$ be a unit Pauli operator supported on $A$, and we have
    \begin{align}
        \bE_{U\sim\cE}~\Tr(\rho_A^{2}) &= \frac{1}{d_A}\left[1 + \sum_{S\subset A}3^{|S|}\bE_\cE~({\lambda_S^2})\right]\nonumber\\
        &\ge \frac{1}{d_A} \left[1 + 3^{|A|}~\bE_\cE (\lambda_A^2)\right].
        \label{eq:methods-reduced-purity-bound}
    \end{align}
    Solving \Cref{eq:methods-reduced-purity-bound} for the unit-Pauli second moment gives
    \begin{align*}
        \bE_{U\sim \cE}~[\Tr(P_A\rho)]^2 \le \frac{d_A \bE_{U\sim\cE}~\Tr(\rho_A^{2}) - 1}{3^{|A|}}.
    \end{align*}
{Since $d_A=2^{|A|}$ and $D_{\rm HS}^2(\rho_A)=\Tr[\rho_A^2]-2^{-|A|}$, the right-hand side equals $\left(2/3\right)^{|A|}\bE_{U\sim\cE}D_{\rm HS}^2(\rho_A)$.}
{For a scaled Pauli term $g=\alpha P_A$, $\Tr(g\rho)=\alpha\Tr(P_A\rho)$ and $\ltwo{g}=|\alpha|$, which gives \Cref{eq:end-circ-statis}.}
\end{proof}
\begin{remark}
Using the orthogonality of the locally scrambling unitary ensemble, one can decompose a general Hermitian operator as \(H = \sum_j c_j g_j\). Then
\(\bE_{U\sim \cE_{\rm LS}} [\Tr(\rho H)]^2\) can be written as
\begin{align}
    \bE_{U\sim \cE_{\rm LS}} ~\sum_{j,k} c_j c_k
    \Tr\left(\rho_0^{\ot 2} U^{\dg \ot 2} (g_j \ot g_k) U^{\ot 2}\right)\nonumber .
\end{align}
By \cref{def:local-scramb}, the cross terms vanish due to the Pauli-mixing property
\cite{angrisaniClassicallyEstimatingObservablesNoiselessQuantumCircuits}. Consequently, the expression reduces to a linear combination of
\(\bE_{U\sim \cE_{\rm LS}}[\Tr(\rho g_j)]^2\).
Therefore, \cref{thm:obs_conc} is sufficient to reflect the concentration behavior of general Hermitian operators.
\end{remark}

\subsection*{Proof of \Cref{eq:var_information_loss}}
\label{app:proof_mid_circ}
\begin{proof}
    Let $\tilde{\rho}_V = G \rho_V G - \rho_V$ and the gradient variance can be written as
    \begin{align}
        \Var_{U,V \sim \cE}~\partial_\theta C &= c^2 u^2 \Tr\left[\bE_{U, V}~g^{\ot 2} (U\tilde{\rho}_VU^\dg)^{\ot 2}\right]\nonumber\\
        &=c^2 u^2 \Tr\left[\bE_{U, V}~(U^\dg gU)^{\ot 2} \tilde{\rho}_V^{\ot 2}\right].
        \label{eq:methods-gradient-variance-tilde}
    \end{align} 
    We let $\lcs$ denote the light-cone subspace at the snapshot $\rho_V$, which can be defined by the support of the operator $U^\dg g U$.
    We can then relax the bound by introducing an operator $M_{\lcs}$ with support $\lcs$:
    \begin{align}
        \Tr&\left[\bE_{U, V}~(U^\dg gU)^{\ot 2} \tilde{\rho}_V^{\ot 2}\right] \nonumber\\
        &\le \sup_{\ltwo{M_{\lcs}}\le 1}\ltwo{g}^2~\Tr\left(M_{\lcs}^{\ot 2}\bE_{V}~ \tilde{\rho}_V^{\ot 2}\right)\nonumber\\
        &=\sup_{\ltwo{M_{\lcs}}\le 1}\ltwo{g}^2~\Tr\left(M_{\lcs}^{\ot 2}\bE_{V}~ \Tr_{\neq \lcs} \tilde{\rho}_V^{\ot 2}\right)\nonumber\\
        &\le\sup_{\ltwo{M_{\lcs}'}\le 1}\ltwo{g}^2~\Tr\left(M_{\lcs}'\bE_{V}~ \Tr_{\neq \lcs} \tilde{\rho}_V^{\ot 2}\right)\nonumber\\
        &= \ltwo{g}^2 \lone{\bE_{V}~ \Tr_{\neq \lcs} \tilde{\rho}_V^{\ot 2}}.
        \label{eq:methods-light-cone-relaxation}
    \end{align}
    The second inequality arises from relaxing $M_{\lcs}^{\ot 2}$ to $M_{\lcs}'$, where $M_{\lcs}'$ is an operator with support on $\lcs^{\ot 2}$.
    Combining \Cref{eq:methods-gradient-variance-tilde,eq:methods-light-cone-relaxation} gives \Cref{eq:var_information_loss}.
\end{proof}

\subsection*{Proof of \Cref{obs:mid-circ}}

\begin{proof}
According to \Cref{eq:var_information_loss}, the gradient variance can be upper bounded by evaluating the trace norm of the quantity $ \bE_V~\Tr_{\neq \lcs}~\tilde{\rho}_V^{\ot 2}$.

We first expand $\tilde{\rho}_V^{\ot 2}$:
\begin{align}
    \tilde\rho_{V}^{\ot 2} &= (G\rho_V G - \rho_V)^{\ot 2}\nonumber\\
    &= \frac{1}{d^2} \left[\sum_j \lambda_j \left(G g_j G - g_j\right)\right]^{\ot 2}\nonumber\\
    &= \frac{1}{d^2} \left(\sum_{j: [g_j,G]\neq 0} \lambda_j \times 2g_j\right)^{\ot 2}\nonumber\\
    &= \frac{4}{d^2} \sum_{i,j:[g_{i,j},G]\neq0}\lambda_i \lambda_j (g_i \ot g_j),
    \label{eq:methods-tilde-rho-expansion}
\end{align}
where $g_{i,j}$ is used as shorthand to indicate either $g_i$ or $g_j$, with a slight abuse of notation. 
Here we use the expanded form of $\rho_V=(\id + \sum_j g_j \lambda_j)/d$.
As the partial trace operation will eliminate most of the terms in $\{g_i\ot g_j\}$, \Cref{eq:methods-tilde-rho-expansion} can be simplified by defining the Pauli set $\cP_{\rm eff}:=\{g|g\in \cP_{\lcs}, gG+Gg=0\}$. 
We then have
\begin{align}
    \Tr_{\neq \lcs}~\tilde{\rho}_V^{\ot 2} = \frac{4}{4^{|\lcs|}} \sum_{i,j: g_{i,j}\in \cP_{\rm eff}}\lambda_i \lambda_j (g_i\ot g_j).
\label{eq:methods-lightcone-partial-trace}
\end{align}
Now we compute the ensemble average in \Cref{eq:methods-lightcone-partial-trace}.
{As the ensemble of $V$ is locally scrambling, we can ignore the cross terms as they do not contribute to the average:}
\begin{align}
    \bE_V~\Tr_{\neq \lcs}~\tilde{\rho}_V^{\ot 2} = \frac{4}{4^{|\lcs|}} \sum_{j: g_{j}\in \cP_{\rm eff}}g_j^{\ot 2}\bE_V~\lambda_j^2.
\label{eq:methods-effective-pauli-average}
\end{align}
We need to evaluate the trace norm that should be upper bounded by the Frobenius norm through the C-S inequality:
\begin{align}
    \lone{\bE_V~\Tr_{\neq \lcs}~\tilde{\rho}_V^{\ot 2}} \le 2^{|\lcs|} \fbnorm{\bE_V~\Tr_{\neq \lcs}~\tilde{\rho}_V^{\ot 2}}.
\label{eq:methods-trace-frobenius}
\end{align}
Using \Cref{eq:methods-effective-pauli-average}, we further expand it to obtain:
\begin{align}
    \fbnorm{\bE_V~\Tr_{\neq \lcs}~\tilde{\rho}_V^{\ot 2}} &= \frac{4}{4^{|\lcs|}} \fbnorm{\sum_{j: g_{j}\in \cP_{\rm eff}}g_j^{\ot 2}\bE_V~\lambda_j^2}\nonumber\\
    &= \frac{4}{4^{|\lcs|}} \sqrt{\sum_{j: g_{j}\in \cP_{\rm eff}} 4^{|\lcs|} \left(\bE_V~\lambda_j^2\right)^2}\nonumber\\
    &\le \frac{4}{4^{|\lcs|}}  \sum_{j:g_j \in \cP_{\rm eff}} 2^{|\lcs|} \bE_V~\lambda_j^2\nonumber\\
    &= 4 D_{\rm HS}^2 [\varrho_{\lcs}(V)].
\label{eq:methods-effective-pauli-frobenius}
\end{align}
Here $\varrho_\Gamma (V) = \id_\lcs / 2^{|\lcs|}+ \sum_{j:g_j \in \cP_{\rm eff}} \lambda_j~ g_j/2^{|\lcs|}$.
Substituting \Cref{eq:methods-trace-frobenius,eq:methods-effective-pauli-frobenius} into \Cref{eq:var_information_loss}, we conclude that
\begin{align*}
  \text{Var}_{U,V\sim \cE} \partial_{\theta_i}C &\leq c^2 u^2 \ltwo{g}^2 2^{|\Gamma|} \fbnorm{\bE_V~\Tr_{\neq \lcs} \tilde{\rho}_V^{\ot 2}}\\
  &\le c^2 u^2 \ltwo{g}^2 2^{|\lcs|+2}\bE_V~D_{\rm HS}^2[\varrho_{\lcs}(V)].
\end{align*}
\end{proof}

\subsection*{Calculation of \Cref{ex:1}}\label{app:ex1}
Consider unitary of a circuit acting on the composite Hilbert space $\cH_{1}\ot \cH_{2}\ot \cH_{3}$ with dimensions $d_1,d_2$, and $d_3$, having the form $U_{(2,3)} e^{-i G_{[2]} \theta} V_{(1,2)}$, where the indices $(i,j,\cdots)$ denote the labels of supports of the operators, i.e., $O_{(i,j,...)} = \id_{\neq i,j,...}\ot O$.
The measurement observable is $g_{[2,3]}$ and the generator of the parameterized gate $G_{[2]}$, where the indices $[i,j,\dots]$ denotes the support of the operator is a subset of the spaces labeled by $(i,j,\dots)$. Specifically, $\supp(X_{(i,j,\dots)}) = \cH_{i}\ot \cH_{j}\ot \dots$ and $\supp(X_{[i,j,\dots]}) \subset \cH_{i}\ot\cH_{j}\ot\dots$. 
In this example, $U$ and $V$ are chosen from 2-design unitary ensembles, which we denote as $U\sim\cE_\text{2-d}$ and $V\sim \cE_\text{2-d}$ respectively. 
For conciseness, we will ignore the notion of $\cE_\text{2-d}$ in the following calculations.

The quantum circuit in Example~\ref{ex:1} can be expressed as $U_{(2,3)} e^{-i G_{[2]} \theta} V_{(1,2)}$, where the subscript notation $A_{(i_1,\dots,i_\ell)}$ with $1 \leq i_1 < \cdots < i_\ell \leq 3$ and $1 \leq \ell \leq 3$ denotes an $n$-qubit operator that acts non-trivially only on Hilbert space $\{\cH_{i_1}, \dots, \cH_{i_\ell}\}$. Here, \emph{nontrivially} means that the operator acts as the identity on all other qubits. Similarly, $A_{[i_1,\dots,i_\ell]}$ denotes an operator supported on the Hilbert space $\{i_1, \dots, i_\ell\}$.

We use the Heisenberg picture and first calculate the average in $g$. According to Weingarten's calculus, we have
\begin{align*}
    \bE_{U}~U_{(2,3)}^{\dg\ot 2} g^{\ot 2}_{[2,3]} U_{(2,3)}^{\ot 2} = b_U \id + c_U \fl_{(2,3)},
\end{align*}
with the coefficients being
\begin{align}
    b_U = \frac{- 1}{d_{23}^2-1},~c_U =\frac{d_{23}}{d_{23}^2 - 1}\label{eq:app:ex1-1}.
\end{align}
To calculate the average in $[\Tr(\rho g)]^2$, we have
\begin{align*}
    \bE_{U, V}~ &V^{\dg\ot 2}_{(1,2)} U_{(2,3)}^{\dg\ot 2} g^{\ot 2}_{[2,3]} U_{(2,3)}^{\ot 2} V^{\dg \ot 2}_{(1,2)} \\
    &= b_U \id + c_U \bE_{V}~ V^{\dg \ot 2}_{(1,2)} \fl_{(2,3)} V^{\ot 2}_{(1,2)},
\end{align*}
where the second term reduces to
\begin{align*}
    \bE_{V}~ V^{\dg \ot 2}_{(1,2)} \fl_{(2,3)} V^{\ot 2}_{(1,2)} = b_V \fl_{(3)} + c_V \fl_{(1,2,3)},
\end{align*}
with the coefficients being
\begin{align}
    b_V = \frac{d_1^2 d_2 - d_2}{d_{12}^2 - 1},~ c_V = \frac{d_1d_2^2 - d_1}{d_{12}^2-1}\label{eq:app:ex1-2}.
\end{align}
Then we have
\begin{align}
    &\bE_{U,V}~[\Tr(\rho g)]^2 = \bE_{U,V}~\Tr(\rho^{\ot 2} g^{\ot 2}) \nonumber\\
    &= \Tr\left[\rho_0^{\ot 2}\left(b_U \id + c_U b_V \fl_{(3)}+c_U c_V \fl_{(1,2,3)}\right)\right] \nonumber\\
    & = b_U + c_U b_V \Tr[\rho_0^{\ot 2} \fl_{(3)}] + c_U c_V \Tr[\rho_0^{\ot 2}\fl_{(1,2,3)}] \nonumber\\
    & = b_U + c_U b_V \Tr\left[\rho^2_{0(3)}\right]+c_U c_V \Tr\left[\rho_0^2\right]\label{eq:app:ex1-3}.
\end{align}
Let $\dD = \sqrt{G_{[2]}} \ot \sqrt{G_{[2]}}^{\dg}$, we have
\begin{align*}
    &\bE_{U, V}~ V^{\dg\ot 2}_{(1,2)} \dD U_{(2,3)}^{\dg\ot 2} g^{\ot 2}_{[2,3]} \dD^\dg U_{(2,3)}^{\ot 2} V^{\dg \ot 2}_{(1,2)} \\
    &\quad= b_U \id + c_U \bE_{V}~ V^{\dg \ot 2}_{(1,2)} \dD \fl_{(2,3)} \dD^\dg V^{\ot 2}_{(1,2)},
\end{align*}
where the second term reduces to
\begin{align*}
    \bE_{V}~ V^{\dg \ot 2}_{(1,2)} \dD \fl_{(2,3)} \dD^\dg V^{\ot 2}_{(1,2)} = b_V' \fl_{(3)} + c_V' \fl_{(1,2,3)},
\end{align*}
with the coefficients being
\begin{align}
    b_V' = \frac{d_1^2 d_2}{d_{12}^2 - 1},~ c_V' = \frac{- d_1}{d_{12}^2 -1}\label{eq:app:ex1-4}.
\end{align}
We have
\begin{align}
    &\bE_{U,V}~ \Tr(\rho_+ g) \Tr(\rho_- g) = \bE_{U,V}~\Tr(\rho^{\ot 2} g^{\ot 2}) \nonumber\\
    &= \Tr\left[\rho_0^{\ot 2}\left(b_U \id + c_U b_V' \fl_{(3)}+c_U c_V' \fl_{(1,2,3)}\right)\right] \nonumber\\
    & = b_U + c_U b_V' \Tr[\rho_0^{\ot 2} \fl_{(3)}] + c_U c_V' \Tr[\rho_0^{\ot 2}\fl_{(1,2,3)}] \nonumber\\
    & = b_U + c_U b_V' \Tr\left[\rho^2_{0(3)}\right]+c_U c_V' \Tr\left[\rho_0^2\right]\label{eq:app:ex1-5}.
\end{align}

Combine \Cref{eq:app:ex1-1,eq:app:ex1-2,eq:app:ex1-3,eq:app:ex1-4,eq:app:ex1-5}, we have
\begin{align*}
    1-r 
    &\approx\frac{d_2^2 d_3 d_1 d_2}{d_1 d_2^3 d_3 - d_1^2 d_2^2 + d_1^2 d_2 d_3},
\end{align*}
where assume the initial state $\rho_0$ to be a product pure state. If, in addition, the dimensional constraint $d_2 \leq d_3/2$ holds, then the Pearson correlation coefficient satisfies
\begin{align*}
    1-r = \mathcal{O}\left(\frac{d_2^2}{d_1}\right).
\end{align*}

\subsection*{Proof of Corollary \ref{cor:batch_decomp_form}}
\label{app:proof_corollary}

\begin{proof}[Proof of Corollary \ref{cor:batch_decomp_form}]
    Let $W=UR_G(\theta)V$ and rewrite the output state as $\rho_W$. 
    We first write the cost function in the Pauli decomposition form $C = \Tr(\rho_W H) = \sum_{j=1}^{L} c_j C_j$, where $C
    _j=\Tr(\rho_W g_j)$ and $\{g_j\}$ are Pauli operators.
    The variance of the gradient can also be expanded component-wise accordingly:
    \begin{align}
        &\Var_{W\sim\cE}\left[\sum_j c_j \partial_{\theta} \Tr(\rho_W g_j)\right] \nonumber\\
        &= \sum_{j,k} u^2 \bE_{W \sim \cE_{\rm LS}}~ c_j c_k \Tr[(\rho_+ - \rho_-) g_j]\Tr[(\rho_+ - \rho_-) g_k]\nonumber\\
        &= \sum_{j,k} u^2 \bE_{W \sim \cE_{\rm LS}}~ c_j c_k \Tr\left[(\rho_+ - \rho_-)^{\ot 2} g_j\ot g_k\right].
        \label{eq:methods-batch-variance-expansion}
    \end{align}
    Here $\rho_\pm$ are parameter shifts of $\rho_W$ on the parameter $\theta$.
    By switching to the Heisenberg picture and applying the trick of Pauli-mixing \cite{angrisaniClassicallyEstimatingObservablesNoiselessQuantumCircuits}, we have 
    \begin{align*}
        &\bE_{W\sim \text{Cl}(2)^{\ot n}}~ W^{\dg\ot 2} (g_j\ot g_k) W^{\ot 2}
        = 0 \quad\text{if } j\neq k,
    \end{align*}
    with $W$ being a tensor product of arbitrary single-qubit Clifford gates.
    This leads to
    \begin{align}
        \bE_{W\sim \text{Cl}(2)^{\ot n}}~ W^{\dg\ot 2} (B_j\ot B_k) W^{\ot 2}
        = 0 \quad\text{if } j\neq k.
        \label{eq:methods-batch-cross-terms}
    \end{align}
    We then have
    \begin{align}
        \Var_{U\sim\cE}~ \partial_{\theta_i}C &= \sum_{j=1}^K u^2 \Tr\left[(\rho_+ - \rho_-)^{\ot 2} B_j^{\ot 2}\right].
        \label{eq:methods-batch-gradient-decomposition}
    \end{align}
    Applying \Cref{eq:methods-batch-cross-terms} to \Cref{eq:methods-batch-variance-expansion} gives the Pauli-batch decomposition in \Cref{eq:methods-batch-gradient-decomposition}.
    This reduces the variance to a linear combination of the Pauli batches, where \Cref{obs:mid-circ} can be applied individually.
    This completes the proof.
\end{proof}

\subsection*{Calculation of \Cref{ex:2}}\label{app:ex2}

{We compute the correlation factor $r$ for the echo-type correlated-layer scrambling circuit at $\theta=0$.}
The purpose of the calculation is to show that the numerator and denominator of $r$ have the same leading fourth-moment contribution, while their first possible difference is at most order $d^{-1}$.

Let $1,2,3,4$ denote the four registers of the $d$-dimensional Hilbert space.
In the untransposed vectorized expression, the two shift patterns are represented by
\begin{align*}
    P' &= \sqrt{G} \ot \sqrt{G}^* \ot \sqrt{G} \ot \sqrt{G}^*,\\
    Q' &= \sqrt{G} \ot \sqrt{G}^* \ot \sqrt{G}^\dg \ot \sqrt{G}^T.
\end{align*}
After applying the partial transpose on the registers labeled by $2$ and $4$, we obtain
\begin{align*}
    P &= {P'}^{T_{24}} = \sqrt{G} \ot \sqrt{G}^\dg \ot \sqrt{G} \ot \sqrt{G}^\dg,\\
    Q &= {Q'}^{T_{24}} = \sqrt{G} \ot \sqrt{G}^\dg \ot \sqrt{G}^\dg \ot \sqrt{G}.
\end{align*}
With this notation, the Pearson coefficient can be written as
\begin{align}
    r &= \frac{\mathcal{M}_Q}{\mathcal{M}_P},\nonumber\\
    \mathcal{M}_X &:= \bE_{U\sim\cE}~\bra{\psi}^{\ot 4} [U^{\ot 4} X U^{\dg \ot 4}]^{T_{24}}\Ket{g}^{\ot 2},\nonumber\\
    &\hspace{1.5cm} X\in\{P,Q\}.
    \label{eq:methods-ex2-r-ratio}
\end{align}
Because \Cref{eq:methods-ex2-r-ratio} depends only on the fourth moment of $U$, the Haar calculation below applies equally to any exact unitary $4$-design.

For $X\in\{P,Q\}$, Weingarten calculus gives
\begin{align}
    \bE_{U\sim \mu_H} (U^{\ot 4} X U^{\dg \ot 4}) = \sum_{\pi, \sigma \in S_4}  \wg(\pi^{-1} \sigma) \Tr\left(V_\sigma X\right)V_\pi,
    \label{eq:methods-ex2-weingarten}
\end{align}
where $V_\pi$ is the permutation representation of $\pi\in S_4$ and $\wg$ is the Weingarten function.
The leading part of \Cref{eq:methods-ex2-weingarten} comes from $\pi^{-1}\sigma=\mathrm{id}$, so it can be written as
\begin{align}
    \bE_{U\sim \mu_H} (U^{\ot 4} X U^{\dg \ot 4}) = \frac{1}{d^4} \sum_{\pi\in S_4}\Tr(V_\pi X) V_{\pi^{-1}} + R_X,
    \label{eq:methods-ex2-leading-moment}
\end{align}
where $R_X$ collects terms whose Weingarten coefficients are smaller by at least one power of $d$.
In \Cref{eq:methods-ex2-leading-moment}, the Pauli structure implies $|\Tr(V_\pi X)|\le d^{\#\mathrm{cycle}(\pi)}$.
The identity permutation is therefore dominant, and for both $P$ and $Q$ it gives $\Tr(V_{\rm id}X)=d^4/4$.
Thus
\begin{align}
    \bE_{U\sim\mu_H}~(U^{\ot 4} P U^{\dg \ot 4}) &= \frac{1}{4}\id + O(d^{-1}),\nonumber\\
    \bE_{U\sim\mu_H}~(U^{\ot 4} Q U^{\dg \ot 4}) &= \frac{1}{4}\id + O(d^{-1}).
    \label{eq:methods-ex2-identity-sector}
\end{align}

The estimate in \Cref{eq:methods-ex2-identity-sector} is not by itself sufficient, because the final contraction in \Cref{eq:methods-ex2-r-ratio} is taken after the partial transpose $T_{24}$.
This contraction can multiply selected permutation terms by a factor of $d$, as is visible from $\Braket{g|g}=d$.
The only permutation contractions with this $O(d)$ enhancement are
\begin{align}
    \bra{\psi}^{\ot 4}V_{\pi_1}^{T_{24}}\Ket{g}^{\ot 2}
    &= \includegraphics[width=0.23\linewidth, valign=c]{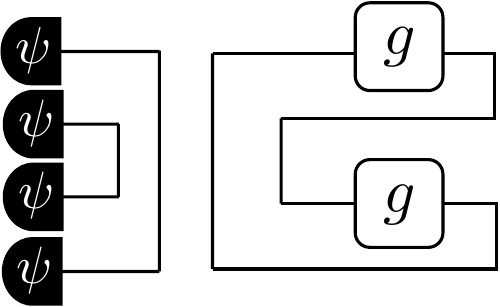} = d,\quad \pi_1=(14)(23),\nonumber\\
    \bra{\psi}^{\ot 4}V_{\pi_2}^{T_{24}}\Ket{g}^{\ot 2}
    &= \includegraphics[width=0.23\linewidth, valign=c]{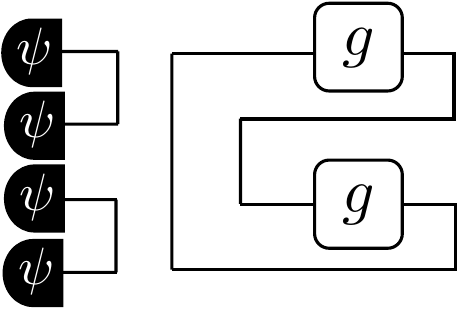} = d,\quad \pi_2=(1234).
    \label{eq:methods-ex2-large-contractions}
\end{align}
All other partial-transposed permutation contractions are lower order or vanish because a single traceless Pauli operator $g$ appears in a closed loop.
The diagrams corresponding to $\pi_1$ and $\pi_2$ are obtained by partially transposing the middle replicas:
\begin{align*}
    V_{\pi_1} &= \left(~~\includegraphics[width=0.1\linewidth, valign=c]{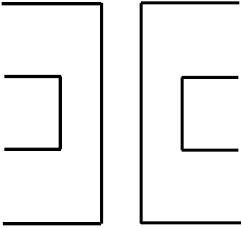}~~\right)^{T_{24}} = \includegraphics[width=0.07\linewidth, valign=c]{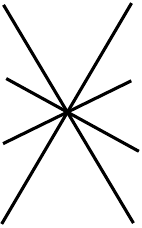} \Rightarrow (14)(23),\\
    V_{\pi_2} &= \left(~~\includegraphics[width=0.09\linewidth, valign=c]{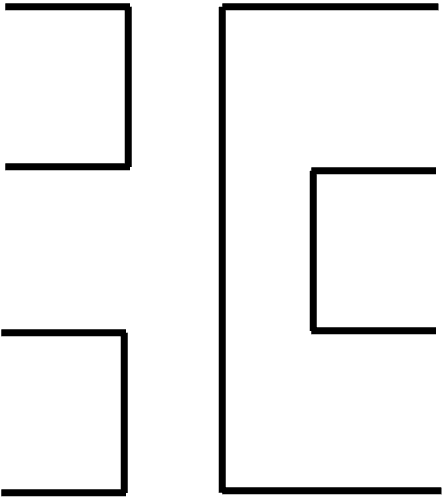}~~\right)^{T_{24}} = \includegraphics[width=0.07\linewidth, valign=c]{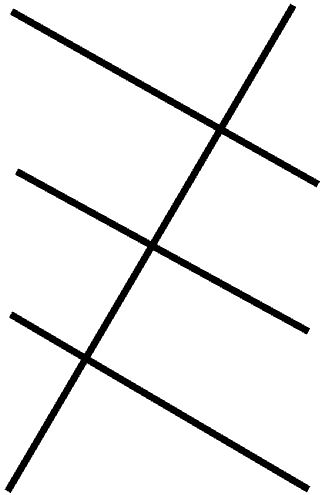} \Rightarrow (1234).
\end{align*}
The remaining tensor contraction fails to define a valid permutation after $T_{24}$,
\begin{equation*}
    \includegraphics[width=0.15\linewidth, valign=c]{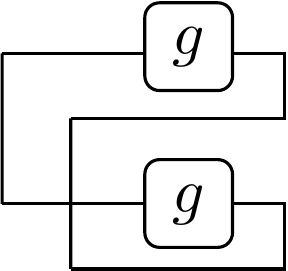},
\end{equation*}
and hence it does not create another $O(d)$ contribution.

We now estimate the two surviving corrections in \Cref{eq:methods-ex2-large-contractions}.
The large-$d$ expansion of the Weingarten function is \cite{collins2022WeingartenCalculusa}
\begin{align*}
    \wg(\pi^{-1}\sigma) = \frac{(-1)^{|\pi^{-1} \sigma|}}{d^{k+|\pi^{-1} \sigma|}} \sum_{j=0}^\infty \frac{\vec{W}_j(\pi,\sigma)}{d^{2j}},
\end{align*}
where $|\pi|=k-\#\mathrm{cycle}(\pi)$ and $\vec{W}_j(\pi,\sigma)$ counts weakly monotone walks on $S_k$ from $\pi$ to $\sigma$ of length $|\pi^{-1}\sigma|+2j$.
The leading coefficient is determined by the cycle type $\alpha$ of $\pi^{-1}\sigma$:
\begin{align*}
    \vec W_0 (\pi, \sigma) = \prod_i \frac{1}{\alpha_i} \binom{2\alpha_i-2}{\alpha_i}.
\end{align*}
In the present estimate, $\sigma=\mathrm{id}$ gives the dominant trace factor $\Tr(V_{\rm id}P)=d^4/4$.
For $\pi_1=(14)(23)$, the cycle type is $(2,2)$ and $\vec W_0(\pi_1,\mathrm{id})=1/4$.
For $\pi_2=(1234)$, the cycle type is $(4)$ and $\vec W_0(\pi_2,\mathrm{id})=15/4$.
Define the contribution associated with a permutation $\pi$ by
\begin{align*}
    C_{\pi}^{(P)}
    := \wg(\pi^{-1}\mathrm{id})\Tr(V_{\rm id}P)
    \bra{\psi}^{\ot 4} V_{\pi}^{T_{24}} \Ket{g}^{\ot 2}.
\end{align*}
Using \Cref{eq:methods-ex2-large-contractions}, the two relevant contributions are
\begin{align}
    C_{\pi_1}^{(P)}
    &= \frac{1}{4d^6}\cdot\frac{d^4}{4}\cdot d + O(d^{-3})\nonumber\\
    &= \frac{1}{16d}+O(d^{-3}),\nonumber\\
    C_{\pi_2}^{(P)}
    &= -\frac{15}{4d^7}\cdot\frac{d^4}{4}\cdot d + O(d^{-4})\nonumber\\
    &= -\frac{15}{16d^2}+O(d^{-4}).
    \label{eq:methods-ex2-surviving-corrections}
\end{align}
The estimates in \Cref{eq:methods-ex2-surviving-corrections} have the same scaling for $Q$.
Therefore both $\mathcal{M}_P$ and $\mathcal{M}_Q$ in \Cref{eq:methods-ex2-r-ratio} have the same order-one identity contribution and differ only through $O(d^{-1})$ corrections.
Consequently,
\begin{align}
    r = \frac{A_0+O(d^{-1})}{A_0+O(d^{-1})}=1+O(d^{-1}),
    \label{eq:methods-ex2-r-scaling}
\end{align}
where $A_0$ is the common order-one contribution from the identity sector.
Equivalently, $1-r=O(d^{-1})$.
{Since the observable second moment is non-concentrated for this circuit, \Cref{eq:norm-grad,eq:methods-ex2-r-scaling} implies that the gradient variance scales as $1-r$, which proves \Cref{ex:2}.}



\section*{DATA AVAILABILITY}
Data generated and analyzed during current study are available from the corresponding author upon reasonable request.

\section*{CODE AVAILABILITY}
Code used to generate data in this study are available from the corresponding author upon reasonable request.

\section*{Acknowledgements}
This work is supported by the National Natural Science Foundation of China Grant No.~12405014, Guangdong Provincial Quantum Science Strategic Initiative (project GDZX2403008), and the General Research Fund (GRF) grant 17303923.

\section*{AUTHOR CONTRIBUTIONS}
X.-W.L. and M.-H.Y. conceived the project. Z.‑S.L. and B.W. drafted the manuscript text and derived the theoretical results. X.‑W.L. contributed to the analysis and discussion of the results. M.‑H.Y. revised the manuscript. All authors approved the final version of the manuscript.

\section*{COMPETING INTERESTS}
The authors declare no competing interests.

\bibliography{reference}

@article{chen2021FederatedQuantumMachineLearning,
  title = {Federated {{Quantum Machine Learning}}},
  author = {Chen, Samuel Yen-Chi and Yoo, Shinjae},
  year = {2021},
  month = apr,
  journal = {Entropy},
  volume = {23},
  number = {4},
  pages = {460},
  publisher = {Multidisciplinary Digital Publishing Institute},
  issn = {1099-4300},
  doi = {10.3390/e23040460},
  copyright = {http://creativecommons.org/licenses/by/3.0/},
  langid = {english}
}

@article{sajjan2022QuantumMachineLearningChemistryPhysics,
  title = {Quantum Machine Learning for Chemistry and Physics},
  author = {Sajjan, Manas and Li, Junxu and Selvarajan, Raja and Hari~Sureshbabu, Shree and Suresh~Kale, Sumit and Gupta, Rishabh and Singh, Vinit and Kais, Sabre},
  year = {2022},
  journal = {Chemical Society Reviews},
  volume = {51},
  number = {15},
  pages = {6475--6573},
  publisher = {Royal Society of Chemistry},
  doi = {10.1039/D2CS00203E},
  langid = {english}
}

@article{caro2023OutofdistributionGeneralizationLearningQuantumDynamicsa,
  title = {Out-of-Distribution Generalization for Learning Quantum Dynamics},
  author = {Caro, Matthias C. and Huang, Hsin-Yuan and Ezzell, Nicholas and Gibbs, Joe and Sornborger, Andrew T. and Cincio, Lukasz and Coles, Patrick J. and Holmes, Zo{\"e}},
  year = {2023},
  month = jul,
  journal = {Nature Communications},
  volume = {14},
  number = {1},
  pages = {3751},
  publisher = {Nature Publishing Group},
  issn = {2041-1723},
  doi = {10.1038/s41467-023-39381-w},
  copyright = {2023 The Author(s)},
  langid = {english}
}

@misc{kobrin2024UniversalProtocolQuantumEnhancedSensingInformationScramblinga,
  title = {A {{Universal Protocol}} for {{Quantum-Enhanced Sensing}} via {{Information Scrambling}}},
  author = {Kobrin, Bryce and Schuster, Thomas and Block, Maxwell and Wu, Weijie and Mitchell, Bradley and Davis, Emily and Yao, Norman Y.},
  year = {2024},
  month = nov,
  number = {arXiv:2411.12794},
  publisher = {arXiv},
  doi = {10.48550/arXiv.2411.12794}
}

@article{bondarenko2020QuantumAutoencodersDenoiseQuantumData,
  title = {Quantum {{Autoencoders}} to {{Denoise Quantum Data}}},
  author = {Bondarenko, Dmytro and Feldmann, Polina},
  year = {2020},
  month = mar,
  journal = {Physical Review Letters},
  volume = {124},
  number = {13},
  pages = {130502},
  publisher = {American Physical Society},
  doi = {10.1103/PhysRevLett.124.130502}
}

@article{romero2017QuantumAutoencodersEfficientCompressionQuantumData,
  title = {Quantum Autoencoders for Efficient Compression of Quantum Data},
  author = {Romero, Jonathan and Olson, Jonathan P and {Aspuru-Guzik}, Alan},
  year = {2017},
  month = aug,
  journal = {Quantum Science and Technology},
  volume = {2},
  number = {4},
  pages = {045001},
  publisher = {IOP Publishing},
  issn = {2058-9565},
  doi = {10.1088/2058-9565/aa8072},
  langid = {english}
}

@article{collins2022WeingartenCalculusa,
  title = {The {{Weingarten Calculus}}},
  author = {Collins, Benoit and Matsumoto, Sho and Novak, Jonathan},
  year = {2022},
  month = may,
  journal = {Notices of the American Mathematical Society},
  volume = {69},
  number = {05},
  pages = {1},
  issn = {0002-9920, 1088-9477},
  doi = {10.1090/noti2474}
}

@article{martinez2025EfficientSimulationParametrizedQuantumCircuitsNonunitala,
  title = {Efficient {{Simulation}} of {{Parametrized Quantum Circuits}} under {{Nonunital Noise}} through {{Pauli Backpropagation}}},
  author = {Martinez, Victor and Angrisani, Armando and Pankovets, Ekaterina and Fawzi, Omar and Stilck Fran{\c c}a, Daniel},
  year = {2025},
  month = jun,
  journal = {Physical Review Letters},
  volume = {134},
  number = {25},
  pages = {250602},
  publisher = {American Physical Society},
  doi = {10.1103/j1gg-s6zb}
}

@article{uvarov2021BarrenPlateausCostFunctionLocalityVariational,
  title = {On Barren Plateaus and Cost Function Locality in Variational Quantum Algorithms},
  author = {Uvarov, A V and Biamonte, J D},
  year = {2021},
  month = may,
  journal = {Journal of Physics A: Mathematical and Theoretical},
  volume = {54},
  number = {24},
  pages = {245301},
  publisher = {IOP Publishing},
  issn = {1751-8121},
  doi = {10.1088/1751-8121/abfac7},
  langid = {english}
}

@article{patti2021EntanglementDevisedBarrenPlateauMitigation,
  title = {Entanglement Devised Barren Plateau Mitigation},
  author = {Patti, Taylor L. and Najafi, Khadijeh and Gao, Xun and Yelin, Susanne F.},
  year = {2021},
  month = jul,
  journal = {Physical Review Research},
  volume = {3},
  number = {3},
  pages = {033090},
  publisher = {American Physical Society},
  doi = {10.1103/PhysRevResearch.3.033090}
}

@article{kokail2019SelfverifyingVariationalQuantumSimulationLatticeModels,
  title = {Self-Verifying Variational Quantum Simulation of Lattice Models},
  author = {Kokail, C. and Maier, C. and {van Bijnen}, R. and Brydges, T. and Joshi, M. K. and Jurcevic, P. and Muschik, C. A. and Silvi, P. and Blatt, R. and Roos, C. F. and Zoller, P.},
  year = {2019},
  month = may,
  journal = {Nature},
  volume = {569},
  number = {7756},
  pages = {355--360},
  publisher = {Nature Publishing Group},
  issn = {1476-4687},
  doi = {10.1038/s41586-019-1177-4},
  copyright = {2019 The Author(s), under exclusive licence to Springer Nature Limited},
  langid = {english}
}

@article{li2017EfficientVariationalQuantumSimulatorIncorporatingActive,
  title = {Efficient {{Variational Quantum Simulator Incorporating Active Error Minimization}}},
  author = {Li, Ying and Benjamin, Simon C.},
  year = {2017},
  month = jun,
  journal = {Physical Review X},
  volume = {7},
  number = {2},
  pages = {021050},
  publisher = {American Physical Society},
  doi = {10.1103/PhysRevX.7.021050}
}

@article{biamonte2017QuantumMachineLearning,
  title = {Quantum Machine Learning},
  author = {Biamonte, Jacob and Wittek, Peter and Pancotti, Nicola and Rebentrost, Patrick and Wiebe, Nathan and Lloyd, Seth},
  year = {2017},
  month = sep,
  journal = {Nature},
  volume = {549},
  number = {7671},
  pages = {195--202},
  publisher = {Nature Publishing Group},
  issn = {1476-4687},
  doi = {10.1038/nature23474},
  copyright = {2017 Macmillan Publishers Limited, part of Springer Nature. All rights reserved.},
  langid = {english}
}

@article{monz2016RealizationScalableShorAlgorithm,
  title = {Realization of a Scalable {{Shor}} Algorithm},
  author = {Monz, Thomas and Nigg, Daniel and Martinez, Esteban A. and Brandl, Matthias F. and Schindler, Philipp and Rines, Richard and Wang, Shannon X. and Chuang, Isaac L. and Blatt, Rainer},
  year = {2016},
  month = mar,
  journal = {Science},
  volume = {351},
  number = {6277},
  pages = {1068--1070},
  publisher = {American Association for the Advancement of Science},
  doi = {10.1126/science.aad9480}
}

@article{peruzzo2014VariationalEigenvalueSolverPhotonicQuantumProcessor,
  title = {A Variational Eigenvalue Solver on a Photonic Quantum Processor},
  author = {Peruzzo, Alberto and McClean, Jarrod and Shadbolt, Peter and Yung, Man-Hong and Zhou, Xiao-Qi and Love, Peter J. and {Aspuru-Guzik}, Al{\'a}n and O'Brien, Jeremy L.},
  year = {2014},
  month = jul,
  journal = {Nature Communications},
  volume = {5},
  number = {1},
  pages = {4213},
  publisher = {Nature Publishing Group},
  issn = {2041-1723},
  doi = {10.1038/ncomms5213},
  copyright = {2014 The Author(s)},
  langid = {english}
}

@article{mele2024IntroductionHaarMeasureToolsQuantumInformation,
  title = {Introduction to {{Haar Measure Tools}} in {{Quantum Information}}: {{A Beginner}}'s {{Tutorial}}},
  shorttitle = {Introduction to {{Haar Measure Tools}} in {{Quantum Information}}},
  author = {Mele, Antonio Anna},
  year = {2024},
  month = may,
  journal = {Quantum},
  volume = {8},
  pages = {1340},
  issn = {2521-327X},
  doi = {10.22331/q-2024-05-08-1340},
  langid = {english}
}

@article{haug2021CapacityQuantumGeometryParametrizedQuantumCircuits,
  title = {Capacity and {{Quantum Geometry}} of {{Parametrized Quantum Circuits}}},
  author = {Haug, Tobias and Bharti, Kishor and Kim, M.S.},
  year = {2021},
  month = oct,
  journal = {PRX Quantum},
  volume = {2},
  number = {4},
  pages = {040309},
  publisher = {American Physical Society},
  doi = {10.1103/PRXQuantum.2.040309}
}

@article{mehendale2023ExploringParameterRedundancyUnitaryCoupledClusterAnsatze,
  title = {Exploring {{Parameter Redundancy}} in the {{Unitary Coupled-Cluster Ansatze}} for {{Hybrid Variational Quantum Computing}}},
  author = {Mehendale, Shashank G. and Peng, Bo and Govind, Niranjan and Alexeev, Yuri},
  year = {2023},
  month = may,
  journal = {J. Phys. Chem. A},
  volume = {127},
  number = {20},
  pages = {4526--4537},
  issn = {1089-5639, 1520-5215},
  doi = {10.1021/acs.jpca.3c00550}
}

@article{ragone2024LieAlgebraicTheoryBarrenPlateausDeep,
  title = {A {{Lie}} Algebraic Theory of Barren Plateaus for Deep Parameterized Quantum Circuits},
  author = {Ragone, Michael and Bakalov, Bojko N. and Sauvage, Fr{\'e}d{\'e}ric and Kemper, Alexander F. and Ortiz Marrero, Carlos and Larocca, Mart{\'i}n and Cerezo, M.},
  year = {2024},
  month = aug,
  journal = {Nat. Commun.},
  volume = {15},
  number = {1},
  pages = {7172},
  publisher = {Nature Publishing Group},
  issn = {2041-1723},
  doi = {10.1038/s41467-024-49909-3},
  copyright = {2024 The Author(s)},
  langid = {english}
}

@article{arrasmith2021effect,
  title={Effect of barren plateaus on gradient-free optimization},
  author={Arrasmith, Andrew and Cerezo, Marco and Czarnik, Piotr and Cincio, Lukasz and Coles, Patrick J},
  journal={Quantum},
  volume={5},
  pages={558},
  year={2021},
  publisher={Verein zur F{\"o}rderung des Open Access Publizierens in den Quantenwissenschaften}
}

@article{liHybridQuantumClassicalApproachQuantumOptimalControl2017,
  title = {Hybrid {{Quantum-Classical Approach}} to {{Quantum Optimal Control}}},
  author = {Li, Jun and Yang, Xiaodong and Peng, Xinhua and Sun, Chang-Pu},
  year = {2017},
  month = apr,
  journal = {Physical Review Letters},
  volume = {118},
  number = {15},
  primaryclass = {quant-ph},
  pages = {150503},
  issn = {0031-9007, 1079-7114},
  doi = {10.1103/PhysRevLett.118.150503}
}

@article{crooksGradientsParameterizedQuantumGatesUsingParametershift2019,
  title = {Gradients of Parameterized Quantum Gates Using the Parameter-Shift Rule and Gate Decomposition},
  author = {Crooks, Gavin E.},
  year = {2019},
  month = may,
  primaryclass = {quant-ph},
  publisher = {arXiv},
  journal = {arXiv},
  doi = {10.48550/arXiv.1905.13311}
}

@article{schuldQuantumMachineLearningFeatureHilbertSpaces2019,
  title = {Quantum {{Machine Learning}} in {{Feature Hilbert Spaces}}},
  author = {Schuld, Maria and Killoran, Nathan},
  year = {2019},
  month = feb,
  journal = {Physical Review Letters},
  volume = {122},
  number = {4},
  pages = {040504},
  publisher = {American Physical Society},
  doi = {10.1103/PhysRevLett.122.040504}
}

@article{meyerExploitingSymmetryVariationalQuantumMachineLearning2023,
  title = {Exploiting {{Symmetry}} in {{Variational Quantum Machine Learning}}},
  author = {Meyer, Johannes Jakob and Mularski, Marian and {Gil-Fuster}, Elies and Mele, Antonio Anna and Arzani, Francesco and Wilms, Alissa and Eisert, Jens},
  year = {2023},
  month = mar,
  journal = {PRX Quantum},
  volume = {4},
  number = {1},
  pages = {010328},
  publisher = {American Physical Society},
  doi = {10.1103/PRXQuantum.4.010328}
}

@article{endoVariationalQuantumSimulationGeneralProcesses2020,
  title = {Variational {{Quantum Simulation}} of {{General Processes}}},
  author = {Endo, Suguru and Sun, Jinzhao and Li, Ying and Benjamin, Simon C. and Yuan, Xiao},
  year = {2020},
  month = jun,
  journal = {Physical Review Letters},
  volume = {125},
  number = {1},
  pages = {010501},
  publisher = {American Physical Society},
  doi = {10.1103/PhysRevLett.125.010501}
}

@article{shaoSimulatingNoisyVariationalQuantumAlgorithmsPolynomial2024,
  title = {Simulating {{Noisy Variational Quantum Algorithms}}: {{A Polynomial Approach}}},
  shorttitle = {Simulating {{Noisy Variational Quantum Algorithms}}},
  author = {Shao, Yuguo and Wei, Fuchuan and Cheng, Song and Liu, Zhengwei},
  year = {2024},
  month = sep,
  journal = {Physical Review Letters},
  volume = {133},
  number = {12},
  primaryclass = {quant-ph},
  pages = {120603},
  issn = {0031-9007, 1079-7114},
  doi = {10.1103/PhysRevLett.133.120603}
}

@article{shorPolynomialTimeAlgorithmsPrimeFactorizationDiscreteLogarithms1997,
  title = {Polynomial-{{Time Algorithms}} for {{Prime Factorization}} and {{Discrete Logarithms}} on a {{Quantum Computer}}},
  author = {Shor, Peter W.},
  year = {1997},
  month = oct,
  journal = {SIAM Journal on Computing},
  volume = {26},
  number = {5},
  pages = {1484--1509},
  issn = {0097-5397, 1095-7111},
  doi = {10.1137/S0097539795293172}
}

@article{harrowQuantumAlgorithmLinearSystemsEquations2009,
  title = {Quantum {{Algorithm}} for {{Linear Systems}} of {{Equations}}},
  author = {Harrow, Aram W. and Hassidim, Avinatan and Lloyd, Seth},
  year = {2009},
  month = oct,
  journal = {Physical Review Letters},
  volume = {103},
  number = {15},
  pages = {150502},
  publisher = {American Physical Society},
  doi = {10.1103/PhysRevLett.103.150502}
}

@inproceedings{groverFastQuantumMechanicalAlgorithmDatabaseSearch1996,
  title = {A Fast Quantum Mechanical Algorithm for Database Search},
  booktitle = {Proceedings of the Twenty-Eighth Annual {{ACM}} Symposium on {{Theory}} of {{Computing}}},
  author = {Grover, Lov K.},
  year = {1996},
  month = jul,
  series = {{{STOC}} '96},
  pages = {212--219},
  publisher = {Association for Computing Machinery},
  address = {New York, NY, USA},
  doi = {10.1145/237814.237866},
  isbn = {978-0-89791-785-8}
}

@article{kandalaHardwareefficientVariationalQuantumEigensolverSmallMolecules2017,
  title = {Hardware-Efficient Variational Quantum Eigensolver for Small Molecules and Quantum Magnets},
  author = {Kandala, Abhinav and Mezzacapo, Antonio and Temme, Kristan and Takita, Maika and Brink, Markus and Chow, Jerry M. and Gambetta, Jay M.},
  year = {2017},
  month = sep,
  journal = {Nature},
  volume = {549},
  number = {7671},
  pages = {242--246},
  publisher = {Nature Publishing Group},
  issn = {1476-4687},
  doi = {10.1038/nature23879},
  copyright = {2017 Macmillan Publishers Limited, part of Springer Nature. All rights reserved.}
}

@article{carolanVariationalQuantumUnsamplingQuantumPhotonicProcessor2020,
  title = {Variational Quantum Unsampling on a Quantum Photonic Processor},
  author = {Carolan, Jacques and Mohseni, Masoud and Olson, Jonathan P. and Prabhu, Mihika and Chen, Changchen and Bunandar, Darius and Niu, Murphy Yuezhen and Harris, Nicholas C. and Wong, Franco N. C. and Hochberg, Michael and Lloyd, Seth and Englund, Dirk},
  year = {2020},
  month = mar,
  journal = {Nature Physics},
  volume = {16},
  number = {3},
  pages = {322--327},
  publisher = {Nature Publishing Group},
  issn = {1745-2481},
  doi = {10.1038/s41567-019-0747-6},
  copyright = {2020 The Author(s), under exclusive licence to Springer Nature Limited}
}

@article{angrisaniClassicallyEstimatingObservablesNoiselessQuantumCircuits,
  title = {Classically Estimating Observables of Noiseless Quantum Circuits},
  author = {Angrisani, Armando and Schmidhuber, Alexander and Rudolph, Manuel S and Cerezo, M and Holmes, Zoe and Huang, Hsin-Yuan},
  journal = {Physical Review Letters},
  year = {2025},
  month = {Sep},
  publisher = {American Physical Society},
  doi = {10.1103/lh6x-7rc3},
}

@article{cerezoCostFunctionDependentBarrenPlateausShallow2021,
  title = {Cost Function Dependent Barren Plateaus in Shallow Parametrized Quantum Circuits},
  author = {Cerezo, M. and Sone, Akira and Volkoff, Tyler and Cincio, Lukasz and Coles, Patrick J.},
  year = {2021},
  month = mar,
  journal = {Nat Commun},
  volume = {12},
  number = {1},
  pages = {1791},
  publisher = {Nature Publishing Group},
  issn = {2041-1723},
  doi = {10.1038/s41467-021-21728-w},
  copyright = {2021 The Author(s)}
}

@article{cerezo2025DoesProvableAbsenceBarrenPlateausImply,
  title = {Does Provable Absence of Barren Plateaus Imply Classical Simulability?},
  author = {Cerezo, M. and Larocca, Martin and {Garc{\'i}a-Mart{\'i}n}, Diego and Diaz, N. L. and Braccia, Paolo and Fontana, Enrico and Rudolph, Manuel S. and Bermejo, Pablo and Ijaz, Aroosa and Thanasilp, Supanut and Anschuetz, Eric R. and Holmes, Zo{\"e}},
  year = {2025},
  month = aug,
  journal = {Nature Communications},
  volume = {16},
  number = {1},
  pages = {7907},
  issn = {2041-1723},
  doi = {10.1038/s41467-025-63099-6}
}

@article{laroccaDiagnosingBarrenPlateausToolsQuantumOptimal2022,
  title = {Diagnosing {{Barren Plateaus}} with {{Tools}} from {{Quantum Optimal Control}}},
  author = {Larocca, Martin and Czarnik, Piotr and Sharma, Kunal and Muraleedharan, Gopikrishnan and Coles, Patrick J. and Cerezo, M.},
  year = {2022},
  month = sep,
  journal = {Quantum},
  volume = {6},
  pages = {824},
  publisher = {Verein zur F{\"o}rderung des Open Access Publizierens in den Quantenwissenschaften},
  doi = {10.22331/q-2022-09-29-824}
}

@article{mccleanBarrenPlateausQuantumNeuralNetworkTraining2018,
  title = {Barren Plateaus in Quantum Neural Network Training Landscapes},
  author = {McClean, Jarrod R. and Boixo, Sergio and Smelyanskiy, Vadim N. and Babbush, Ryan and Neven, Hartmut},
  year = {2018},
  month = nov,
  journal = {Nat Commun},
  volume = {9},
  number = {1},
  pages = {4812},
  publisher = {Nature Publishing Group},
  issn = {2041-1723},
  doi = {10.1038/s41467-018-07090-4},
  copyright = {2018 The Author(s)}
}

@article{ortizmarreroEntanglementInducedBarrenPlateaus2021,
  title = {Entanglement-{{Induced Barren Plateaus}}},
  author = {Ortiz Marrero, Carlos and Kieferov{\'a}, M{\'a}ria and Wiebe, Nathan},
  year = {2021},
  month = oct,
  journal = {PRX Quantum},
  volume = {2},
  number = {4},
  pages = {040316},
  publisher = {American Physical Society},
  doi = {10.1103/PRXQuantum.2.040316}
}

@article{pesahAbsenceBarrenPlateausQuantumConvolutionalNeural2021,
  title = {Absence of {{Barren Plateaus}} in {{Quantum Convolutional Neural Networks}}},
  author = {Pesah, Arthur and Cerezo, M. and Wang, Samson and Volkoff, Tyler and Sornborger, Andrew T. and Coles, Patrick J.},
  year = {2021},
  month = oct,
  journal = {Physical Review X},
  volume = {11},
  number = {4},
  pages = {041011},
  publisher = {American Physical Society},
  doi = {10.1103/PhysRevX.11.041011}
}

@misc{ragoneUnifiedTheoryBarrenPlateausDeepParametrized2023,
  title = {A {{Unified Theory}} of {{Barren Plateaus}} for {{Deep Parametrized Quantum Circuits}}},
  author = {Ragone, Michael and Bakalov, Bojko N. and Sauvage, Fr{\'e}d{\'e}ric and Kemper, Alexander F. and Marrero, Carlos Ortiz and Larocca, Martin and Cerezo, M.},
  year = {2023},
  month = sep,
  number = {arXiv:2309.09342},
  primaryclass = {quant-ph},
  publisher = {arXiv}
}

@article{sackAvoidingBarrenPlateausUsingClassicalShadows2022,
  title = {Avoiding {{Barren Plateaus Using Classical Shadows}}},
  author = {Sack, Stefan H. and Medina, Raimel A. and Michailidis, Alexios A. and Kueng, Richard and Serbyn, Maksym},
  year = {2022},
  month = jun,
  journal = {PRX Quantum},
  volume = {3},
  number = {2},
  pages = {020365},
  publisher = {American Physical Society},
  doi = {10.1103/PRXQuantum.3.020365}
}

@article{uvarovBarrenPlateausCostFunctionLocalityVariational2021,
  title = {On Barren Plateaus and Cost Function Locality in Variational Quantum Algorithms},
  author = {Uvarov, A V and Biamonte, J D},
  year = {2021},
  month = may,
  journal = {J. Phys. A: Math. Theor.},
  volume = {54},
  number = {24},
  pages = {245301},
  publisher = {IOP Publishing},
  issn = {1751-8121},
  doi = {10.1088/1751-8121/abfac7}
}

@article{wangNoiseinducedBarrenPlateausVariationalQuantumAlgorithms2021,
  title = {Noise-Induced Barren Plateaus in Variational Quantum Algorithms},
  author = {Wang, Samson and Fontana, Enrico and Cerezo, M. and Sharma, Kunal and Sone, Akira and Cincio, Lukasz and Coles, Patrick J.},
  year = {2021},
  month = nov,
  journal = {Nat Commun},
  volume = {12},
  number = {1},
  pages = {6961},
  publisher = {Nature Publishing Group},
  issn = {2041-1723},
  doi = {10.1038/s41467-021-27045-6},
  copyright = {2021 The Author(s)}
}

@article{xuScramblingDynamicsOutofTimeOrderedCorrelatorsQuantumManyBody2024,
  title = {Scrambling {{Dynamics}} and {{Out-of-Time-Ordered Correlators}} in {{Quantum Many-Body Systems}}},
  author = {Xu, Shenglong and Swingle, Brian},
  year = {2024},
  month = jan,
  journal = {PRX Quantum},
  volume = {5},
  number = {1},
  pages = {010201},
  issn = {2691-3399},
  doi = {10.1103/PRXQuantum.5.010201}
}
\bibliographystyle{apsrev4-2}

\newpage
\appendix
\clearpage
\onecolumngrid

\end{document}